\documentclass{article}
\usepackage[margin=1in]{geometry}
\usepackage{graphicx}
\usepackage[utf8]{inputenc}
\usepackage[toc]{appendix}
\usepackage{amsmath}
\usepackage{amsthm}
\usepackage{amssymb}
\usepackage{enumitem}

\newtheorem{thm}{Theorem}[section]
\newtheorem{lem}[thm]{Lemma}

\newtheorem{cor}{Corollary}

\theoremstyle{definition}

\setlist[itemize]{leftmargin=*}

\begin{document}

\title{Decentralized network congestion control for DAG-based distributed ledger system}                      
%\tnotemark[1,2]

%\tnotetext[1]{This document is the results of the research
%   project funded by the National Science Foundation.}

% \tnotetext[2]{The second title footnote which is a longer text matter
%    to fill through the whole text width and overflow into
%    another line in the footnotes area of the first page.}

\author{Mayank Pandey, Rachit Agarwal,
Sandeep Kumar Shukla, Nishchal Kumar Verma\\
		IIT Kanpur,	Kanpur, India \\
	\texttt{\{pandeym,rachitag,sandeeps,nishchal\}@iitk.ac.in}}

\date{}
\maketitle

\begin{abstract}
We propose a variable and behavior-based node-specific proof-of-work (PoW) model for a directed acyclic graph (DAG)-based distributed ledger technology (DLT) network to mitigate decentralized network congestion control. Network congestion control for centralized communication systems is an established field of study, with detailed and continuous research being done on the subject. However, attention to congestion control in decentralized networks is relatively recent and underexplored, especially with DLT, such as blockchain and DAG-based networks. For the DLT networks, the network congestion is caused by factors such as transaction spamming, an increase in the user base, and the launch of new tokens. We focus on the congestion caused by the spamming of transactions within the blockchain and DAG-based DLT network. Based on the network throughput of transactions per second and consensus procedure, the DAG-based DLT needs to control network spamming more than the blockchain network. The PoW model within the DLT consensus framework is a limited deterrent against spamming. Our model provides equal opportunities for all stakeholders regardless of their computational resources. It prevents and penalizes any node that attempts to spam or dominate the network with more than the prescribed number of transactions. Since the system nodes compete to issue transactions with finite network resources, we display the system behavior through a non-cooperative game. Further, we show that our model enforces prescribed behavior amongst the nodes through the proof of the existence of Nash equilibrium in the game. 
\end{abstract}

% \begin{graphicalabstract}
% \includegraphics{figs/cas-grabs.pdf}
% \end{graphicalabstract}

% \begin{highlights}
% \item Research highlights item 1
% \item Research highlights item 2
% \item Research highlights item 3
% \end{highlights}

% \begin{keywords}
% Blockchain \sep Directed acyclic graph \sep Distributed ledger \sep Congestion control \sep Proof of Work \sep Spamming \sep Nash equilibrium
% \end{keywords}

\maketitle

\section{Introduction}

%Distributed Ledger Technologies (DLT), such as blockchain, are increasingly attracting attention as an alternative to centralized solutions. 
In this paper, we conceptualize and formulate a variable proof of work (PoW) model for a Directed Acyclic Graph (DAG)-based Distributed Ledger Technology (DLT) network to tackle the problem of decentralized network congestion control. Congestion control is a significant requirement for the seamless functioning of communication networks. For centralized systems, it is a well-studied area~\cite{low2002internet}. However, the congestion control for decentralized networks warrants a fresh and separate approach, especially after the advent of DLT networks. Few application-specific methods for the same exist, such as in vehicular network messaging~\cite{balador2021survey}. However, broader and more in-depth work is still required to develop DLT networks' congestion control methodologies.

The DLT networks, such as blockchain, provide a new method for peer-to-peer transactions and communication with improved privacy and security~\cite{dorri2017blockchain}. However, network congestion is still possible in such systems. For example, an increase in the DLT user base leads to an increase in transaction-generating points. Additionally, a smart contract-based DLT network faces pressure when a new token gets launched due to the addition of assets and subsequent increases in the number of transactions over the network. Both aforementioned scenarios are not illicit behavior as they are part of the DLT operations. Apart from these two, the spamming of transactions by a malicious user also leads to the unnecessary hold of communication resources, causing congestion in the network. We focus on preventing transaction spamming to control congestion and ensure the seamless functioning of the DLT network. As we proceed, we look at the specific DLT features related to network traffic~\cite{blocksurveyLi}.

The functioning of a blockchain is dependent on miners or high stakeholders for transaction inclusion and confirmation through forming and adding blocks. The consensus process based on either PoW~\cite{fullmer2018analysis} or proof of stake (PoS)~\cite{saleh2021blockchain} keeps essential communication to a bare minimum while verifying the transactions. The resource requirement for adding the blocks ensures that the miners or stakers are selective in terms of block formation and its broadcasting. At the individual level, the network traffic-dependent transaction fees discourage users from spamming the network and encourage them to post only relevant transactions. The miners or stakers use the transaction fees offered to decide whether to include the concerned transaction in the block. The PoW process is a limited deterrent against transaction spamming by restricting the broadcast of proposed block data. However, it also causes poor throughput compared to centralized systems. Another disadvantage is low scalability due to the competitive structure of the consensus process for including the block. It also leads to the isolation of low-resource users, such as those using IoT devices, and the discouragement of essential micro-transactions. Such shortcomings paved the way for the conception of DAG-based DLT networks. 

%These miners or stakers use the transaction fees offered to decide whether to include the concerned transaction in the block. It restricts users from sending unnecessary transactions for the purpose of spamming.
%However, it also causes poor throughput compared to centralized systems and low scalability due to the competitive structure of the consensus process for including the block. Such characteristics, while causing problems of their own, are the only measures to prevent transaction spamming in the blockchain network to some extent. 

%Aside from blockchain,
DAG based DLT offer a viable alternative to blockchain for decentralized transaction networks~\cite{ben2018dag}. They have almost instantaneous transaction confirmation~\cite{pervez2018comparative}, ensuring the participation of almost all participants~\cite{reyna2018blockchain}. In a DAG-based DLT, the nodes collectively increase the speed throughput by adding their transactions and validating existing transactions. They cooperate and confirm each others' transactions instead of competing to add their own set of transaction blocks like in blockchain~\cite{kotilevets2018}. Due to this, DAG-based networks are able to operate without the transaction fee requirement. Hence, throughput and scalability for the network are comparatively higher for DAG-based DLT~\cite{chainordag}. However, the same characteristics of DAG-based DLT, which lead to scalability, also make the network vulnerable to transaction spamming by unscrupulous users and, subsequently, cause network congestion. While blockchain networks have the measures of PoW and transaction fees, DAG-based DLT lacks their equivalent of the same. It increases the individual user's power in DAG-based DLT to add transactions. Therefore, we focus on network congestion in DAG-based DLT caused by transaction spamming. 
%In this paper, we propose a solution for the same. 

%A notable DAG-based ledger is the Tangle, which the IOTA foundation developed. The Tangle-based decentralized network offers a communication and transaction framework for IoT devices without the inclusion of transaction fees. It also shares the problems related to communication systems. One such problem is the occurrence of network congestion and spamming.  
% Shortcomings of the blockchain ledger and why DAG is better. Build up to why IOTA is better than other DAGs. Use SOK wala survey paper.

%DAG-based distributed ledgers have almost instantaneous transaction confirmation~\cite{pervez2018comparative}. They also ensure the participation of small devices such as in IoT~\cite{reyna2018blockchain}. For DAG based ledger, any node adding a transaction into the ledger has to verify the existing transactions and attach itself to them. The ledger runs on cooperation between the nodes verifying each others' transactions. There is no wastage of resources in block formation efforts or competition amongst the miners or stakers to get their proposed block included. Each user is equally responsible for adding and verifying transactions in the ledger. It also ensures that the system can run without the transaction fee requirement. The aforementioned mitigation of blockchain shortcomings increases the individual user's power in DAG-based DLT to add transactions. It makes the network vulnerable to congestion due to transaction spamming. Therefore, we discuss the approach to resolve the same.    

In DAG-based ledgers, a nominal PoW is required as a "proof of verification and attachment"~\cite{coordicide}. It differs from the blockchain's PoW, which is the basis for competition between miners to add the block. DAG-based systems that use PoW based attachment include IOTA~\cite{silvano2020iota}, Graphchain~\cite{graphchain}, Phantom~\cite{2018phantom}, and Meshcash~\cite{meshcash}. Originally, the objective of the nominal PoW does not include preventing transaction spamming. Therefore, theoretically, any node can attach any number of transactions to the DAG ledger. In real-time, it causes many issues. Firstly, the communication resources in the form of channel and network services are limited. A node with higher computational resources at its disposal can issue transactions at a much faster rate and spam the network, thereby restricting other users.
Additionally, a group of nodes with significant resources can collude amongst themselves to get their invalid transactions, such as double-spending, verified by spamming the ledger. There is an upper limit on the available communication resources in terms of the number of transactions due to the periodic requirement of syncing the ledger. Above this limit, the network performance gets affected with users not being able to broadcast any additional transactions without delay~\cite{YANG2006175}. It affects their inclusion in the ledger and sometimes invalidates the transaction itself due to the expiry of its validity~\cite{coordicide}. 

Among the DAG-based DLT, the difficulty level of IOTA's PoW process is variable and unique to the behavior of each participating node~\cite{coordicide}. It is formulated to prevent transaction spamming only. However, its methodology is not sufficient to completely deter the users from spamming the IOTA ledger (also known as the Tangle). Besides IOTA's PoW process for transaction attachment, there is no effective method to deter transaction spamming in DAG-based DLT networks. After observing the existing shortcomings, we proceed to provide a solution to prevent such spamming. To facilitate this, we propose a user reputation-based difficulty level model for the DAG-based DLT networks, which is also applicable to other DLT networks with modifications. We provide a detailed description of our methodology and prove its efficacy through the numerical simulations of the functioning based on a non-cooperative game. Through our proposed methodology, We achieve effective congestion control for DAG-based DLTs without compromising the overall scalability.

\subsubsection*{Our contributions}
%Based on the observations, 
Our core contributions are listed below.
\begin{itemize}
    \item We propose a \textbf{\textit{congestion control model}} for a DAG-based DLT network preventing transaction spamming by the nodes. It prescribes and establishes the ideal user behavior in terms of carrying out the transactions. Our method provides every user with an equal opportunity to participate and imposes a penalty for deviating from the laid-down line of action.
    \item We establish the \textbf{\textit{efficacy of the prescribed user behavior}} through our model by the establishment of Nash equilibrium in a non-cooperative game with respect to the addition of transactions to the ledger.
    \item \textbf{\textit{Continuity of prescribed behavior}} in the event of a change in the network user composition is established through a uniform change in the Nash equilibrium.
\end{itemize}     

The remaining part of our manuscript is organized as follows. Section~\ref{sec:prelim} provides the existing shortcomings in decentralized congestion control with the motivation for our proposed method. Section~\ref{sec:model} describes the congestion control model for the decentralized system with a distributed DAG-based ledger. Section~\ref{sec:nash} shows the establishment of Nash equilibrium for cooperative behavior for stationary as well as dynamic sets of nodes. Section~\ref{sec:example} demonstrates our model's efficacy through numerical examples and the simulation. Finally, section~\ref{sec:conc} concludes our findings with a brief discussion. 

% You must have at least 2 lines in the paragraph with the drop letter
% (should never be an issue)
\section{Background and Existing shortcomings}
\label{sec:prelim}
%The system in question contains two graphs, $\mathbb{G}_{N}$ and $\mathbb{G}_{L}$.
In this section, we discuss the congestion control situation for DLT-based decentralized networks. For blockchain networks, the competition-based PoW, like in Bitcoin, acts as a deterrent against the spamming of proposed blocks. Similarly, the PoS consensus mechanism discourages spamming by enforcing block formation probability to be directly proportional to the cryptocurrency staked by the validator. In a traditional blockchain, the addition of a transaction to the ledger happens in stages. Firstly, the transaction goes into the pool of pending transactions. Then, from the pool, it is picked up by the miner or validator and added to the new block proposed for adding to the blockchain. The block reward and transaction fees on individual transactions act as incentives to carry out the whole process.  
%The users carry out the double-stage process of creating and adding the block as a competition in lieu of the transaction fees for each transaction. 
The aforementioned methods and characteristics are either resource-consuming or high-stakes favoring. Nonetheless, they ensure the reluctance of blockchain users to unnecessarily clog the DLT network through spamming of transactions. 
%After blockchain, we look into the DAG-based DLT networks 

In the DAG-based distributed ledger, the transactions either get added directly by the user (blockless) or after accumulation into the blocks. Instead of being a linear sequence like in blockchain, the ledger structure in DAG-based DLT is a dynamic, expanding, unidirectional graph without any loops. For block-based DAG, the decentralized network incorporates blocks instead of individual transactions in a DAG-based structure. The protocols for mining blocks are PoW-based. In Phantom~\cite{2018phantom}, PoW-based mining for blocks is followed by a recursive k-clustering algorithm for their selection, which is competitive instead of cooperative. In Meshcash~\cite{meshcash} and Spectre~\cite{spectre}, nodes use PoW to create blocks in multiple rounds. Effectively, all block-based DAG DLT networks use PoW as proof of attachment to prevent spamming. Also, they create different types of messages for transactions, voters, confirmation, and proposer information. For the consensus, the existing blockchain methods of transaction fees and the two-stage process of transaction pool and block formation apply to block-based DAGs. In the context of preventing transaction spamming, they are at par with the blockchain networks. 

Due to similarities with the blockchain process, block-based DAGs favor the users with high computational resources at their disposal. The normal nodes, such as low-resource users and IoT devices, are disadvantaged in such cases. Such disadvantages are being taken care of in blockless DAGs. 
%Therefore, they are below blockless DAGs in terms of preference and adoption as an alternative to the blockchain. 

Among the blockless DAG-based networks, IOTA~\cite{coordicide} is currently the most popular with a generalized DAG-based ledger structure. There are other variants of IOTA, such as G-IOTA~\cite{giota} and E-IOTA~\cite{eiota}, with variations in the transaction verification process and attachment limit. In Graphchain~\cite{graphchain}, in addition to PoW, the transaction fees are left to collect for the users who attach their transactions to it and verify. Another type of blockless DAG structure is where the individual nodes maintain their multiple parallel chains of transactions with inter-chain connections when required, such as Hashgraph~\cite{hashgraph}. Such a structure is a restricted DAG ledger that favors high-resource nodes in forming longer chains and getting their transactions confirmed faster than others. 

For all the aforementioned DAG-based DLT networks, PoW is used as a deterrent against transaction spamming for congestion control. It is either nominal or adaptive. The adaptive PoW method is more responsive to transaction frequency than other networks. As we proceed, we explain how PoW works in the context of preventing transaction spamming. The congestion control model of IOTA in the form of adaptive PoW difficulty level ($d_{i}(t)$) is given in equation~\eqref{iotapow}, sourced from~\cite{coordicide}. All the adaptive PoW methods are more or less similar in terms of formula.    

\begin{equation}\label{iotapow}
    d_{i}(t) = d_{0} + \lfloor \gamma_{i} \times a_{i}(t) \rfloor
\end{equation}
Here, $d_{i}(t)$ is assigned for the user $\mathcal{V}^{N}_{i} \in \mathcal{V}^{N}$ at time $t$, where $\mathcal{V}^{N}$ is the set of users in the concerned DLT network. The term $d_{0}$ is the base difficulty level, which is the same for all users across the network. $a_{i}(t)$ represents the number of transactions $\mathcal{V}_{N,i}$ performed in the time period $[t-m,t]$ for $m\in \mathbb{N}$, while the term $\gamma_{i} \in [0,1]$ is associated with its reputation known as mana. Before returning to the role and significance of the aforementioned terms, we first explain the role of $d_{i}(t)$ in the PoW-based process. 

In the PoW method, the user applies brute force technique to compute an output value within a pre-specified range through hash function~\cite{hashfunc}, where input is the transaction or block data as per the consensus requirements. The average computation time required for the same is proportional to the difficulty level. Suppose the standard output value has $D$ number of binary digits and the difficulty level is $d_{i}(t)$. It is used to set specific output targets, such as finding hash output with a value less than $2^{D-d_{i}(t)}$. The one given in equation~\eqref{iotapow} is IOTA's PoW which is adaptive to the user reputation. As brute force is the only viable technique in this case, the average computational work required is proportional to $2^{d_{i}(t)}$. Therefore, if $d_{i}(t)$ increases by $1$, then the requirement gets doubled. In case of a decrease by $1$, it gets halved. Hence, the average computational resource requirement $\Upsilon_{i}(t)$ is proportional to assigned difficulty level $d_{i}(t)$ for the respective user $\mathcal{V}^{N}_{i}$ as shown in~\eqref{resconsup}. 

\begin{equation}\label{resconsup}
    \Upsilon_{i}(t) \propto 2^{d_{i}(t)}
\end{equation}
For the nominal PoW, we have $d_{i}(t)=d_{0}$, i.e., the difficulty level remains the same across the time for the concerned DLT network. The variable difficulty level, $d_{i}(t)$, changes with time based on the parameters defined in the formula.

Recalling from the equation~\eqref{iotapow}, $\gamma_{i} \in [0,1]$ is associated with the term ``mana'', used to represent reputation for IOTA users. The more mana a user has, the lesser the value of $\gamma_{i}$ will be. The model for the pending mana and mana from~\cite{coordicide} is given in~\eqref{pendmana}.
\begin{equation}\label{pendmana}
\begin{aligned}
   & m_{i}(t) = m_{i}(0)e^{-\gamma t} + \frac{\alpha S_{i}}{\gamma}\left(1 - e^{-\gamma t}\right)\\
   & M_{i}(t) = M_{i}(0) e^{-\gamma t}
\end{aligned}
\end{equation}
In eq~\eqref{pendmana}, the term $m_{i}(t)$ is the pending mana for node $i$ at time $t$, while $M_{i}(t)$ is the mana. $S_{i}$ is the amount of token node $\mathcal{V}_{N,i}$ holds at time $t$, while $\alpha$ and $\gamma$ are the generation and decay rate respectively. $m_{i}(t)$ is converted to $M_{i}(t)$ when node $i$ spends token. The reduction in $m_{i}(t)$ due to reduction in $S_{i}$ at time $t$ gets added to $M_{i}(t)$. The pending mana both generates and decays at a pre-decided rate, while the mana only decays. The mana $M_{i}(t)$ is used to determine the reputation $\gamma_{i}$ for node $i$ at time $t$ in~\eqref{iotapow} for the PoW difficulty level $d_{i}(t)$. 

The equation~\eqref{iotapow} can also be used as a generalized model explaining different types of PoW methods. For instance, in Bitcoin, the variables in the term $\lfloor \gamma_{i} \times a_{i}(t) \rfloor$ can represent the average time period between the addition of two successive blocks, based on PoW model given in~\cite{bitcoinwhite}. The higher-than-intended time gap leads to a decrease in $d_{i}(t)$, while the lower time gap increases it. 

The state-of-the-art PoW methods, as per the equation~\eqref{iotapow}, favor the users having high resources based on the consensus process, whether the computational resource or cryptocurrency stake in the concerned network. The more the throughput in terms of transactions per second (TPS) is, the more the concerned DLT network is vulnerable to transaction spamming. Our objective is to create a congestion control method through the difficulty level, which controls spamming without either the support of variable transaction fees or any compromise on the network throughput capacity. Also, we aim to provide the methodology that is viable for all the PoW-based DLT networks, whether nominal or competitive. As we proceed, we describe in detail our proposed methodology followed by its validation in terms of fairness of transaction issuing opportunity.

\section{Proposed congestion control method}
\label{sec:model}
To present the model, let us assume a dynamic decentralized system operating a distributed DAG-based DLT network represented as $\mathbb{G}(T)$. For simplicity, we represent $\mathbb{G}(T)$ as $\mathbb{G}$ with all its parameters being described for the time instance $T$. The system $\mathbb{G}$ is a combination of two dynamic graphs, $\mathbb{G}^{N}$ and $\mathbb{G}^{L}$. 
Here, $\mathbb{G}^{N}$ is the graph of connected participating nodes, while $\mathbb{G}^{L}$ is the graph of the distributed ledger. $\mathbb{G}^{N}$ is a bidirectional graph represented as $\mathbb{G}^{N} = (\mathcal{V}^{N}, \mathcal{E}^{N}, \mathcal{A}^{N})$. Here, $\mathcal{V}^{N}$ represents the decentralized network users as a set of nodes given by $\mathcal{V}^{N} = \{\mathcal{V}^{N}_{1}, \mathcal{V}^{N}_{2}, \cdots, \mathcal{V}^{N}_{n}\}$, numbered in the order of their joining the system. Let the cardinality of the set $\mathcal{V}^{N}$ be given as $n$. It is a variable term as the users are free to join and leave the concerned DLT network. The expression $\mathcal{E}^{N} \subset \mathcal{V}^{N} \times \mathcal{V}^{N}$ represents the set of edges representing the communication channel between nodes. Further, $\mathcal{A}^{N}$ is the $n\times n$ adjacency matrix depicting the information on connecting unidirectional edges, with the elements of matrix $a^{N}_{i,j} \in \{0,1\}$ depicting information of connection between nodes.  

On the other hand, $\mathbb{G}^{L}$ has a DAG-shaped structure. It represents the distributed ledger of transactions defined as $\mathbb{G}^{L} = (\mathcal{V}^{L}, \mathcal{E}^{L}, \mathcal{A}^{L})$. $\mathbb{G}^{L}$ is dynamic and expands rapidly with time as new transactions are made. Here the set of nodes $\mathcal{V}^{L}$ represents the transactions, given by $\mathcal{V}^{L} = \{\mathcal{V}^{L}_{1}, \mathcal{V}^{L}_{2}, \cdots, \mathcal{V}^{L}_{l}\}$, numbered in the order of their arrival or addition to the graph. Let the cardinality of the set $\mathcal{V}^{L}$ be given as $l$, which is incremental as the new transactions continuously get added to the concerned ledger. The set of edges $\mathcal{E}^{L} \subset \mathcal{V}^{L} \times \mathcal{V}^{L}$ are unidirectional and always directed from the nodes added later towards the nodes added earlier. The $l\times l$ adjacency matrix $\mathcal{A}^{L} $ stores the information on transaction attachment such that the elements of matrix $a^{L}_{i,j} \in \{0,1\}$. 

The elements in $\mathcal{V}^{N}$ create, issue, and add the transactions into the ledger $\mathbb{G}^{L}$. The users in $\mathcal{V}^{N}$ are connected in a decentralized network, and each user has the information on the state of $\mathbb{G}^{L}$. The state of $\mathbb{G}^{L}$ is updated continuously by the addition of new transactions by users in $\mathcal{V}^{N}$. Every transaction in $\mathcal{V}^{L}$ is connected to at least two previously added transactions though edges belonging to $\mathcal{E}^{L}$. A depiction of a DAG-based DLT network is given in the figure~\ref{dagfig}. In $\mathbb{G}^{L}$, the transactions at the end which are not yet verified are referred to as "tips". Therefore, the procedure of selecting the existing transactions for verification and attachment is called the tip selection process.

\begin{figure}
\includegraphics[width=9.25cm]{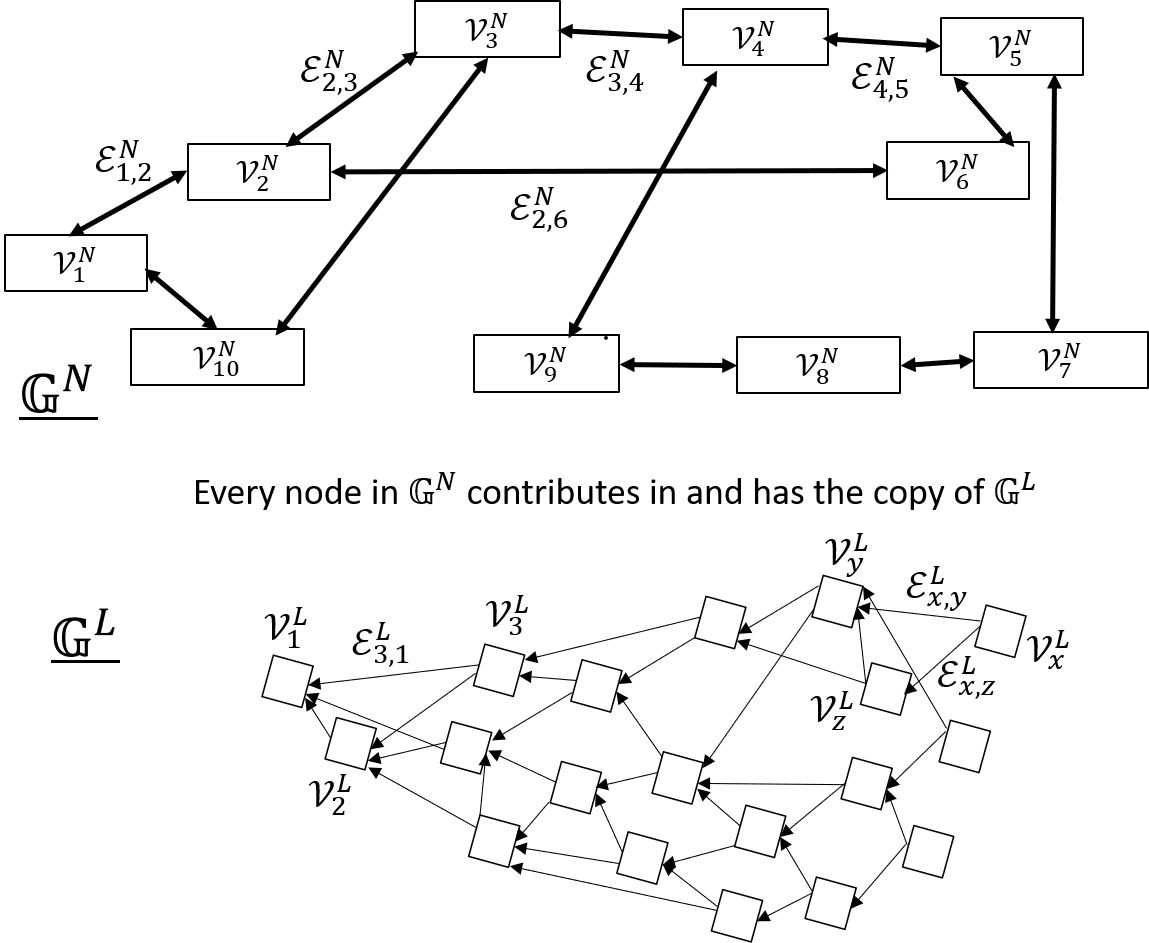}
\centering
\caption{Representation of $\mathbb{G}^{N}$ and $\mathbb{G}^{L}$ }
\label{dagfig}
\end{figure}

Before proceeding further, we would like to declare that we will use the terms $\mathcal{V}^{N}_{i}$ and $i$ interchangeably, based on the requirement in representation. The functioning of our proposed method is explained through the conditions regarding the state of $\mathbb{G}$ given below.
\begin{enumerate}
    \item The parameters $n$ and $l$ are time-variant terms, with the rate of change of $l$ much higher than that of $n$.
    \item Initially, we proceed with the assumption that $\frac{dn}{dt} = 0$, i.e., the number of users in the DLT system does not change.
    \item Afterwards, we also test our method with $\frac{dn}{dt} \neq 0$ with the condition of existing users leaving the network and new users joining.
    \item  Each element of $\mathcal{V}^{L}$ (transactions) stores some information, such as involved user addresses, amounts, and metadata.
    \item  Each element of $\mathcal{E}^{L}$ (connecting edges between transactions) has unity weight. 
    \item A node in $\mathcal{V}^{N}$ adds at least a predecided number of edges ($2$ in our case) in $\mathcal{E}^{L}$ while adding a single node in $\mathcal{V}^{L}$.
\end{enumerate}
In the DLT network, the nodes in $\mathcal{V}^{N}$ can add the transactions into $\mathbb{G}^{L}$ up to a given limit during a particular time period due to real-time limits on communication resources~\cite{JIN20141}. This limit is represented through transactions per second (TPS), which is the average capacity of the concerned DLT network. Let the limit be represented as $\Lambda$ number of transactions for a given time period of $T$ seconds, with the average capacity being $\frac{\Lambda}{T}$ TPS. 

%$\mathbb{G}_{N}$ is dynamic, meaning a node in $\mathcal{V}_{N}$ can either join and leave at any time or have intermittent connectivity with other nodes in $\mathbb{G}_{N}$. The $\mathcal{E}_{N}\in \mathbb{G}_{N}$ reflects the information on the interconnection between the nodes. Further, note that $ [a_{N,i,j}] \in \{0,1\} \forall i,j \in [1,n]$, where 0 means that there is no connection between the nodes $i$ and $j$ in $\mathcal{V}_{N}$ whereas 1 means that there is a connection between them. Each node in $\mathbb{G}_{N}$ maintains a full copy of $\mathbb{G}_{L}$. 

%For the decentralized network of $\mathcal{V}_{N} \in \mathbb{G}_{N}$, every user in $\mathcal{V}_{N}$ can join and leave at any time. The $\mathcal{E}_{N}\in \mathbb{G}_{N}$ stores the information on the interconnection within the nodes. It is not necessary that every node is directly connected to every node in $\mathcal{V}_{N}$, i.e, $ [a_{N,i,j}] \in \{0,1\} \forall i,j \in [1,n]$. The nodes communicate across the network through gossip. Since the network operates a distributed ledger, each user in $\mathcal{V}_{N}$ has access to a full copy of $\mathbb{G}_{L}$.  Each user is free to add a transaction at any instant without being overseen by an administrative authority. Hence,  

Each user in $\mathcal{V}^{N}$, after creating a transaction, selects a pre-decided number of the existing transactions in the $\mathbb{G}^{L}$ to verify and attach its transaction to them. These existing transactions are referred to as tips. We keep the required number of tips at $2$ as it is an acceptable number in most existing DAG-based DLT networks. After selection, the PoW is created according to the assigned difficulty level and attaches its transaction to the selected tips after their verification. Then it broadcasts the complete transaction state. There is a realistic possibility that a single or a group of malicious nodes with high computational resources would try to dominate the ledger by adding only their own transactions into $\mathbb{G}_{L}$ by quickly solving the nominal PoW. The distributed ledger-based decentralized networks design their consensus protocol to mitigate such phenomenon.

Let $T_{1}, T_{2}, \cdots T_{\infty}$ be the successive discrete time instances with non-uniform periods between them. The structure of a transaction in $\mathcal{V}^{L}$ is given in~eq\eqref{transtruc}, where $T_{y},T_{z} < T_{x}$.
\begin{equation}\label{transtruc}
    \mathcal{V}^{L}_{x}(T_{x}) = \{ \mathcal{V}^{L}_{y}(T_{y}), \mathcal{V}^{L}_{z}(T_{z}), D^{L}_{x}, \mathcal{E}^{L}_{x,y}, \mathcal{E}^{L}_{x,z}, \nu_{x}\}
\end{equation}
Suppose a user $\mathcal{V}^{N}_{i}$ attaches a transaction $\mathcal{V}^{L}_{x}(T_{x})$ into $\mathbb{G}^{L}$, with $T_{x}$. The term $D^{L}_{x}$ contains the data of the transaction $\mathcal{V}^{L}_{x}$ which contains the information such as which user issued the transaction, what is the value transferred and other system-specific information. The transactions $\mathcal{V}^{L}_{y}(T_{y})$ and $\mathcal{V}^{L}_{z}(T_{z})$ are the existing transactions in $\mathbb{G}^{L}$ to which $\mathcal{V}^{L}_{x}(T_{x})$ is attached through the edges $\mathcal{E}^{L}_{x,y}$ and $\mathcal{E}^{L}_{x,z}$ respectively. The term $\nu_{x}$ is the nonce value of transaction $\mathcal{V}^{L}_{x}(T_{x})$. It is computed and used in a similar way to the Bitcoin blockchain network. The nonce value $\nu_{x}$ is the proof of work (PoW) and the proof of attachment of a transaction into the DAG ledger.

%To understand how the PoW model works for the DAG-based distributed ledger, suppose the user $\mathcal{V}_{N,i}$ creates a transaction $\mathcal{V}_{L,x}(t_{x})$ and attaches it at instance $t_{x}$ to two existing transactions $\mathcal{V}_{L,y}(t_{y})$ and $\mathcal{V}_{L,z}(t_{z})$ with PoW difficulty level $d_{i}(t_{x})$. It then passes the information to the users $\mathcal{V}_{N,j} \in \mathbb{G}_{N} \forall j\neq i, [a_{N,i,j}]\neq 0 $. The users check the correctness of $\mathcal{V}_{L,x}(t_{x})$ and whether the appropriate difficulty level ($d_{i}(t_{x})$ in this case) was followed or not. 

%The only way to solve PoW with difficulty level $d_{i}(t_{x})$ is to apply brute force technique. The higher the value of $d_{i}(t)$ is, the more average time and resources it will take for $\mathcal{V}_{N,i}$ to solve the PoW associated with it. Therefore, the node with high and specialized computational resources has a higher probability of solving the PoW earlier than the node with the normal machine. 

% The difficulty level in equation~\ref{iotapow} is a positive integer value.  The base level $d_{0}$ is fixed for everyone in the network. The variable component must be adjusted accordingly to establish equality as well as to penalize for spamming. One way to implement is to incur a heavy cost on the violators without external intervention. The component $\gamma_{i}$ is dependent on the mana, which depends on the spending of the user $\mathcal{V}_{N,i}$. The component $a_{i}(t)$ is positive as long as $\mathcal{V}_{N,i}$ issues a transaction message.  

We aim to incur the minimum possible cost in terms of computational and communication resources with the penalty for crossing the limit. Therefore, we propose the user reputation-based difficulty level model for the decentralized network with a distributed DAG-based ledger. 

We present our methodology in a discrete-time frame with a uniform period. Let the time instances be represented as $\{0,T,2T,\cdots,(C-1)T,CT,\cdots\}$ with time period of $T$, where $C \in \mathbb{N}$. Our method defines the user-specific difficulty level for the PoW required for transaction attachment based on the concerned user's reputation. We define the reputation alloted for the user $\mathcal{V}^{N}_{i}$ for the time period $[CT, (C+1)T]$ as $R^{N}_{i}(CT)$. It is generated based on the spending of tokens between the duration $[(C-1)T, CT]$. $R^{N}_{i}(CT)$ has two components; one is related to the value of spent tokens through their transactions, and the other is related to the number of transactions issued, both during the time period $T$. Here tokens refer to the cryptocurrency of the concerned DLT network. The values mentioned above are computed at regular intervals and remain fixed for a time period of $T$. We assume that the time synchronization across the nodes is uniform, i.e., Universal coordinated time~\cite{panfilo2019}. The overall reputation model is given in~\eqref{fullrepmodel}, with the two components $f^{B}_{i}(CT)$ and $f^{w}_{i}(w_{i}(CT))$ further defined. 
\begin{equation}\label{fullrepmodel}
    R^{N}_{i}(CT) = F(f^{B}_{i}(CT),f^{w}_{i}(w_{i}(CT)))
\end{equation}
The component $f^{B}_{i}(CT)$ has the range of $[0,1]$, and it represents the relative value in terms of tokens transferred by $\mathcal{V}^{N}_{i}$ between the instance $[(C-1)T, CT]$, with respect to the total number of tokens transferred during that period. If $\mathcal{V}^{N}_{i}$ does not transfer any tokens during $[(C-1)T,CT]$, then $f^{B}_{i}(CT) = 1$. On the other hand, if $\mathcal{V}^{N}_{i}$ is responsible for all the token transfers in the network during $[(C-1)T, CT]$, then $f^{B}_{i}(CT) = 0$. The model for formulating the value of $f^{B}_{i}(CT)$ is given in~\eqref{balrep}. Here, $B^{N}_{i}(CT)$ is the token balance of $\mathcal{V}^{N}_{i}$ at instance $CT$, while $\Delta B^{N}(CT)$ is the total token amount in the ``sent'' for all the transactions in the network during the duration $[(C-1)T,CT]$. The objective for $f^{B}_{i}(CT)$ is to reward the spending, but up to a certain limit.  
\begin{equation}\label{balrep}
% \begin{split}
%     f_{B,i}(CT) & = 1  \;\;\;\;\;\;\;\;\;\;\;\;\;\;\;\;\;\;\;\;\;\;\;\;\;\;\;\;\;\;\;\;\;\;\;\;\;\;\;\;\; \Delta B_{N}(CT) = 0\\
%     & = 1 - \frac{B_{N,i}((C-1)T) - B_{N,i}(CT)}{\Delta B_{N}(CT)} \;\; \Delta B_{N}(CT) > 0
% \end{split}
f^{B}_{i}(CT)=\left\{\begin{matrix}
1 & \text{if}  \Delta B^{N}(CT) = 0 \\ 
1 - \frac{\max \{0, B^{N}_{i}((C-1)T) - B^{N}_{i}(CT)\}}{\Delta B^{N}(CT)} &   \text{if} \Delta B^{N}(CT) > 0
\end{matrix}\right.
\end{equation}
To encourage active participation, but discourage spamming beyond a certain point, the suitable function for the reputation $f^{w}_{i}$ must be diminishing in structure represented via~\eqref{dimfunc}. Here, $w_{i}(CT)$ is the number of transactions added by $\mathcal{V}^{N}_{i}$ to $\mathbb{G}^{L}$ during the period $[(C-1)T,CT]$. The network capacity is divided among the users as in~\eqref{netcap}. 
\begin{equation}\label{netcap}
    \Lambda = n K_{cross} + K_{add}
\end{equation}
Here, $K_{cross} \in \mathbb{N}$ is defined as the average transaction capacity of each user for $n$ users in the system. $K_{add} << K_{cross}$ is the residual system capacity. Based on $K_{cross}$, we choose the value of the terms $K_{0}$, $K_{1}$ and $K_{2}$ to form a quadratic equation. The values are chosen such that one solution is $K_{cross}$, while the other is a random negative integer. The negative value gets discarded as the number of transactions cannot be negative. The transaction frequency-based reputation equation and $K_{cross}$ as the only viable solution is given in~\eqref{dimfunc}. 
\begin{equation}\label{dimfunc}
\begin{aligned}
    & f^{w}_{i}(w_{i}(CT)) = K_{0} + K_{1}w_{i}(CT) - K_{2}w_{i}^{2}(CT) \\ 
    & K_{cross} = \frac{K_{1}+\sqrt{K_{1}^{2}+4K_{2}K_{0}}}{2K_{2}}  
\end{aligned}
\end{equation}
The total number of transactions added to $\mathcal{V}^{L} \in \mathbb{G}^{L}$ between $[(C-1)T,CT]$ is represented as $W(CT) = \mathcal{V}^{L}(CT) - \mathcal{V}^{L}((C-1)T) = \sum_{i \in \mathcal{V}^{N}} w_{i}(CT)$. In order for the decentralized network to work properly, the condition given is $W(CT) \leq \Lambda$. In theoretical terms, each node can unilaterally use the whole network resources for itself, i.e., $ w_{i}(CT)^{max} =  \Lambda$. 
%At $w_{i}(CT) = 0$, the value of $f_{w,i}(CT)$ will be $K_{0}$. For the given parameters, the term $f_{w,i}(CT)$ will first increase and then decrease. In theoretical terms, each node can unilaterally use the whole network resources for itself, i.e., $ w_{i}(CT)^{max} =  \Lambda$.

We formulate the PoW difficulty level based on the reputation components $f^{B}_{i}(CT)$ and $f^{w}_{i}(w_{i}(CT))$ to prevent spamming and network congestion. The proposed difficulty level equation for the node $\mathcal{V}^{N}_{i}$ for the period $[CT, (C+1)T]$ based on the transaction frequency in the period $[(C-1)T, CT]$ is given in~\eqref{newdifflevel}.
\begin{equation}\label{newdifflevel}
\begin{aligned}
    d_{i}(CT) = &  d_{0} - \left\lfloor \frac{df^{w}_{i}(w_{i}(CT))}{dw_{i}(CT)}  \right\rfloor \\ 
    & + \lceil f^{B}_{i}(CT) \times (- \min\{0,f^{w}_{i}(w_{i}(CT)))\} \rceil
    \end{aligned}
\end{equation}
% As explained previously, the linear increase in the difficulty level leads to two times the requirement for more resources to add the same transactions. From equation~\ref{intvalue}, we find the maximum number of transactions $\mathcal{V}_{N, i}$ can carry out between $[(C-1)T, CT]$ with minimum computational work. 
The term $d_{0}$ is the base difficulty level. The derivative component is used for linear increment till $w_{i}(CT) \leq K_{cross}$, while the third term is for quadratic increment when $w_{i}(CT) > K_{cross}$. For the transaction frequency $w_{i}(CT)$ in the period $[(C-1)T, CT]$, the node $\mathcal{V}_{N,i}$ has to add transactions with PoW difficulty level $d_{i}(CT)$ for the entire period of $[CT, (C+1)T]$. From~\eqref{newdifflevel}, the objective of $\mathcal{V}^{N}_{i}$ is to utilize minimum individual computational resources. The node has to find the balance between the transaction frequency $w_{i}(CT)$ and the subsequently allotted difficulty level $d_{i}(CT)$. Meanwhile, the total number of transaction limit in a period $T$ are restricted to $\Lambda$ due to physical infrastructure and synchronization requirement of the network. The scenario is ideal to be represented as a non-cooperative game with a Nash equilibrium. Therefore, first we explain the system with the cases of the stationary and dynamic number of nodes and then we find the optimal solution using the designed non-cooperative game.  
 % \begin{mini}|l|
	%   {[(C-1)T,CT] }{\frac{d(d_{i}(CT)-d_{0})}{d w_{i}(CT)}\;\;}{}{}
 %      \addConstraint{\max w_{i}(CT)}{}
	%   \addConstraint{\Lambda - K_{add} \leq \sum_{i \in \mathcal{V}_{N}} w_{i}(CT)  \leq \Lambda}{}
	%   \addConstraint{\Delta B^{N}(CT) \geq 0}{}{}
 %   \label{singopt}  \end{mini}
% Based on~\eqref{singopt}, we represent the objective for the whole set of nodes $\mathcal{V}_{N}$ as given in~\eqref{redopt}. 
% \begin{mini}|l|
% 	  {[(C-1)T,CT] }{\sum_{i \in \mathcal{V}_{N}} \frac{d_{i}(CT)}{w_{i}(CT)}\;\;}{}{}
% 	  \addConstraint{\Lambda - K_{add} \leq \sum_{i \in \mathcal{V}_{N}} w_{i}(CT)  \leq \Lambda}{}
% 	  \addConstraint{\Delta B_{N}(CT) \geq 0}{}{}
%     \label{redopt} \end{mini}

Both $f^{B}_{i}(CT)$ and $(- min(0,f^{w}_{i}(w_{i}(CT))))$ are $\geq 0$. From~\eqref{balrep}, the condition for $f^{B}_{i}(CT)$ to become zero is given as $B^{N}_{i}((C-1)T) - B^{N}_{i}(CT) = \Delta B^{N}(CT)$. The condition is highly unlikely as it means only user $\mathcal{V}^{N}_{i}$ sent the tokens during the period $[(C-1)T,CT]$. This is highly unlikely to happen in a large decentralized network with multiple nodes. Therefore, we move to the term $(- min(0,f^{w}_{i}(w_{i}(CT))))$ for its minimizing conditions. We solve for the term $- \lfloor \frac{d f^{w}_{i}(w_{i}(CT))}{d w_{t}(CT)} \rfloor$. From~\eqref{dimfunc}, the expression $\frac{d f^{w}_{i}(w_{i}(CT))}{d w_{t}(CT)}$ becomes as $ \frac{d f^{w}_{i}(w_{i}(CT))}{d w_{t}(CT)} = K_{1} - 2K_{2}w_{i}(CT)$. The difficulty level $d_{i}(CT)$ values for varying ranges of $w_{i}(CT)$ is listed in appendix~\ref{app:list}. Our objective is to find the minimum value for the term $\frac{d_{i}(CT)}{w_{i}(CT)}$ for $0 \leq w_{i}(CT) \leq \Lambda$. On observing the values obtained for $d_{i}(CT)$ for the given range of $w_{i}(CT)$, the local minima occurs at $w_{i}(CT) = K_{cross}$. We prove this condition as the ideal scenario for the concerned DLT network through non-cooperative game formulation, described in the upcoming section.

% In a decentralized network, a node issues and adds a transaction to $\mathbb{G}_{L}$, then broadcasts it to other users in $\mathbb{G}_{N}$. Every node has the same level of permission to issue and add the transactions to the ledger graph. However, if any user sends too many transactions into the network, it may prevent others from transmitting their own due to finite communication resources. To counter the same, the concept of PoW-based attachment with variable difficulty levels is introduced. 

As explained above, the system assigns a difficulty level for the period $[CT, (C+1)T]$ after obtaining the transaction frequency input in the period $[(C-1)T, CT]$
. Based on the observations, it is in the interest of a rational node $\mathcal{V}^{N}_{i}$ to keep $ w_{i}(CT) \leq K_{cross}$ in order to optimize the combination of difficulty level vs. transaction frequency. From equation~\eqref{resconsup}, we know that the increase or decrease in the difficulty level by $1$ unit leads to the doubling or halving of the average resource requirement, respectively.

%Suppose a user $\mathcal{V}^{N}_{i}$ does $\leq \lfloor K_{tan}\rfloor$ transactions in the period $[(C-1)T,CT]$. In such case, the average resource consumption based on~\eqref{resconsup} for $\mathcal{V}^{N}_{i}$ during the period $[CT,(C+1)T]$ is represented as $\Upsilon^{\lfloor K_{tan}\rfloor}_{i}(CT) \propto  2^{d_{0}}$. 

For the total network capacity $\Lambda$, the optimal solution must be to utilize maximum capacity as well as provide a fair chance to each node. For $w_{i}(CT) = K_{cross}$, we have $d_{0} - \lfloor (K_{1} - 2K_{2} K_{cross})\rfloor = d_{cross}$. Then the average resource requirement is represented as $\Upsilon^{ K_{cross}}_{i}(CT) \propto 2^{d_{cross}}$. 

%For a node $\mathcal{V}^{N}_{i}$ with $\lfloor K_{tan}\rfloor w_{i}(CT) \leq K_{cross}$ transactions, the resource consumption is given as $ \Upsilon^{ K_{cross}}_{i}(CT) = M 2^{d_{0} - \lfloor (K_{1} - 2K_{2} K_{cross})\rfloor}$.

If $w_{i}(CT) = K_{cross} - 1$, then the subsequent resource requirement is represented as $\Upsilon^{ K_{cross} - 1}_{i}(CT) \propto 2^{d_{cross}- \lfloor 2K_{2}\rfloor}$. 

On the other hand, for $w_{i}(CT) = K_{cross} + 1$, the subsequent difficulty level for the period $[CT,(C+1)T]$ is given as $\Upsilon^{ K_{cross} + 1}_{i}(CT) \propto \left ( 2^{d_{cross}}\right)
     \times 2^{\lceil f^{B}_{i}(CT) \times (-K_{0} - K_{1}( K_{cross} + 1) + K_{2}( K_{cross} + 1)^{2} \rceil}$.
     
Based on the above observations, we represent the ratio of resource requirement for following and not following the prescribed behavior as given in equation~\eqref{eq:30}.
\begin{equation}\label{eq:30}
\begin{split}
     \Upsilon^{ K_{cross} + 1}_{i}(CT) \approx 2^{{ K^{2}_{cross}}} \times 
     \Upsilon^{ K_{cross}}_{i}(CT)
     \end{split}
\end{equation}
From equation~\eqref{eq:30}, it is evident that there is considerably heavier burden of the computational resources for the period $[CT,(C+1)T]$ if the $\mathcal{V}^{N}_{i}$ has $w_{i}(CT) >  K_{cross}$ by even $1$. It prompts a rational node to keep within the prescribed limit to avoid a disproportionate burden on its computational resources for successive time periods. As we proceed, we discuss the case of reputation with a dynamic set of nodes.
%Therefore, to utilize the full capacity of the network with optimal total resource usage, every user should issue $K_{cross}$ transactions.%As we proceed, we explore the ideal behavior in terms of issuance and transmission of transactions within the network. 
% For an individual user $\mathcal{V}_{N,i}$, there are $4$ scenarios for an individual user issuing transactions for the interval $[t-T,t], \forall t\geq T$. The same are given as below.
\subsection{Joining and leaving of nodes}
Until now, we explored the scenario where the number of nodes is fixed, i.e., $n$ is a constant value. We computed the PoW difficulty level allocation for the nodes for the period $[CT,(C+1)T]$ based on transactions done in the period $[(C-1)T,CT]$. 

%In $d_{i}(CT)$, the term $d_{0}$ is constant while the remaining part is variable and depends on $w_{i}(CT)$. During the period $[(C-1)T,CT]$, every user adds a certain number of transactions. For such a case, the individual difficulty level for the period $[CT,(C+1)T]$ will be decided based on the previous transaction history. 
For the situation where a new node $\mathcal{V}^{N}_{n+1}$ joins the network at a time instance falling in a particular time period, the reputation parameters get adjusted in the subsequent time period. For a generalized scenario, we assume that with the joining of the new node, the network capacity changes, i.e., either increases or decreases~\cite{JIN20141}. Suppose the new node joins during the time period $[(C- 1)T, CT]$. In such a case, our methodology will allow the new node to start issuing transactions from the time period $[CT, (C+1)T]$. For the variable node case, we represent the network capacity without the new user as $\Lambda^{(C-1)T}$ for the period $[(C- 1)T, CT]$. When the network capacity and the number of users change, the average network capacity also changes. The system defined values $K_{0}$, $K_{1}$ and $K_{2}$ update accordingly to the new capacity $\Lambda^{CT}$ starting from the period $[CT, (C+1)T$. Let the system defined parameter values be represented as $K^{(C-1)T}_{0}$, $K^{(C-1)T}_{1}$ and $K^{(C-1)T}_{2}$ for the period $[(C-1)T,CT]$. The transaction frequency-based reputation for the variable node set for the period $[(C- 1)T, CT]$ is represented via equation~\eqref{vardimfunc}. A similar form of change follows if a node leaves the network.
\begin{equation}\label{vardimfunc}
\begin{aligned}
& f^{w}_{i}(w_{i}(CT)) = K^{(C-1)T}_{0} + K^{(C-1)T}_{1}w_{i}(CT) \\
&- K^{(C-1)T}_{2}w_{i}^{2}(CT) 
\end{aligned}
\end{equation}
With the change in the number of users and network capacity, the average network capacity, i.e., $K_{cross}$ changes. For our purpose, let the number of users in the network before and after the change during the period $[(C-1)T, CT]$ be $n_{1}$ and $n_{2}$, respectively. In~\eqref{varprevnetcap}, we obtain the average network capacity for the period $[(C-1)T,CT]$, where $  K^{(C-1)T}_{cross}$ is the average capacity and $K^{(C-1)T}_{add}$ is the residual capacity.
\begin{equation}\label{varprevnetcap}
    \Lambda^{(C-1)T} = n_{1}  K^{(C-1)T}_{cross} + K^{(C-1)T}_{add}
\end{equation}

For the time period starting $[CT,(C+1)T]$ and onward, the average network capacity and, subsequently, reputation value parameters get updated according to $n_{2}$ users. The updated network capacity and the average capacity are represented in~\eqref{varupnetcap} with the variables having the same meaning as in~\eqref{varprevnetcap} for the updated time period. We provide a generalized scenario of the effect of change in user number coming from the period $[CT,(C+1)T]$. 

% The joining of new users either increases, decreases, or has no effect on the average network capacity depending on the individual user capacity. A high-resource user provides respite to the network by improving the average transaction capacity, while a low-resource user reduces the same. 

\begin{equation}\label{varupnetcap}
    \Lambda^{CT} = n_{2}  K^{CT}_{cross} + K^{CT}_{add}
\end{equation}
The difficulty level model for the fixed set given in~\eqref{newdifflevel} will operate in the same way for the variable node set but with varying network parameters.

Once we have the difficulty level model for Pow and adding transactions, we proceed to show that by following the proposed model and prescribed limits, the maximum possible transactions get added with the minimum possible total computational resource. We demonstrate the formulation of a non-cooperative game based on the variation of node reputation and prescribed limits in the difficulty level model. Subsequently, we establish the prominence of prescribed behavior through the establishment of Nash equilibrium in a non-cooperative game for both fixed and variable node sets.  
%Based on the aforementioned scenarios for the users issuing transactions, the resource usage and conditions for network operations can be derived as below. For the cases below, our objective is to optimise the computational resource consumption across the network at a time of $t$. The assumption for the below assessment is that, at instance, $t$, every user is issuing at least one transaction. 
\section{Optimal Resource consumption via non-cooperative games}
\label{sec:nash}

% For case I, where network capacity of $\Lambda \geq n \lfloor K_{cross}\rfloor$ is present, the network will remain underutilized if the individual nodes issue the transactions within their defined limits. For such scenarios, the network capacity can be utilized in two ways. The network should either provide entry to a new node or modify the individual user transaction limits accordingly. 
%In this section, we show the efficacy of the proposed model-based prescribed behavior by establishing a unique Nash equilibrium for the non-cooperative game between the nodes. 
The scenario in the previous section for $n$ users operating in a decentralized network is similar to a non-cooperative game. The reasons behind the similarity are that there is no communication between the users regarding their respective strategies, i.e., no user tells others how many transactions it is planning to issue and add during a time period. The physical resource constraints put a limit on the total number of transactions that can be added to the ledger within a time period without causing a backlog. Also, there is no binding agreement between the users to limit their transactions. We demonstrate the efficient utilization of the network through the prescribed behavior by showing the process in the form of a non-cooperative game. As we proceed, we show that the only way to maximize the possible utilization of network resources with minimum possible computational resources is at a uniform Nash equilibrium.  
\subsection{Nash equilibrium for a fixed set of nodes}
We define the non-cooperative game for the consideration in~\eqref{game}, whose payoff applies in the period $[CT,(C+1)T]$ based on the strategy carried out in the period $[(C-1)T,CT]$.
\begin{equation}\label{game}
    \Gamma  \triangleq \{\mathcal{V}^{N}, (w_{i}(CT))_{i \in \mathcal{V}^{N}}, (\Theta_{i})_{i \in \mathcal{V}^{N}} \}
\end{equation}
For the given game, we have the set of nodes $\mathcal{V}^{N} = \{\mathcal{V}^{N}_{1}, \mathcal{V}^{N}_{2}, \cdots, \mathcal{V}^{N}_{n}\}$ as a set of players. The system dynamics for the game are given in equations~\eqref{balrep}, \eqref{netcap}, \eqref{dimfunc}, and~\eqref{newdifflevel}. For $\mathcal{V}^{N}_{i} \in \mathcal{V}^{N}$, $w_{i}(CT)$ is the strategy of issuing the number of transactions in the period $[(C-1)T, CT]$. 

The payoff of the strategy applied in $[(C-1)T,CT]$ for $\mathcal{V}^{N}_{i}$ is $\Theta_{i}$, which gets allotted in the period $[CT,(C+1)T]$. The overall objective for the game in \eqref{game} is to utilize the full available network capacity in the current period as well as have optimal resource consumption for the subsequent period. The strategy vector for the system for the period $[(C-1)T,CT]$ is given by $w(CT) \triangleq [w_{1}(CT), w_{2}(CT), \cdots, w_{n}(CT)]^{T}$.

We design the payoff function function $\Theta_{i}$ for $\mathcal{V}^{N}_{i}$ by combining the components $U_{i}$ and $U_{av}$. For the component $U_{i}$ in~\eqref{util1}, the variable is $w_{i}(CT)$.
\begin{equation}\label{util1}
     U_{i} = \zeta_{1}w_{i}(CT) - \zeta_{2}w^{2}_{i}(CT)
\end{equation}
Also, the component based on the average strategy of the network $U_{av}$ is being given in~\eqref{avutil1}.
\begin{equation}\label{avutil1}
     U_{av} = \zeta_{1}\sum_{i \in \mathcal{V}^{N}}\frac{w_{i}(CT)}{n} - \zeta_{2}\sum_{i \in \mathcal{V}_{N}}\left(\frac{w_{i}(CT)+\cdots}{n}\right)^{2}
\end{equation}
The general form of the utility function $\Theta_{i} \in \Re$ is given below, where $\kappa_{1},\kappa_{2} > 0$ are weightage parameters. 
\begin{equation}\label{util}
    \Theta_{i} = \kappa_{1} U_{i} + \kappa_{2} U_{av}
\end{equation}
The above equation can be derived into the quadratic payoff form similar to given in~\cite{frihauf2011}, \cite{dokka2022}. We represent the payoff as given in~\eqref{derutil}.
\begin{equation}\label{derutil}
\begin{split}
    \Theta_{i} = \eta_{1} w_{i}(CT) + \eta_{2} w^{2}_{i}(CT) + \eta_{3} \sum^{j \neq i}_{j \in \mathcal{V}^{N}} w_{j}(CT)\\
    + \eta_{4} \sum^{j \neq i}_{j \in \mathcal{V}^{N}} w^{2}_{j}(CT) + \eta_{5}\sum^{j\neq i}_{i,j \in \mathcal{V}^{N}} w_{i}(CT) w_{j}(CT) 
    \end{split}
\end{equation}
The value of the constants are given as $\eta_{1} = \kappa_{1}\zeta_{1} + \frac{\kappa_{2}\zeta_{1}}{n}$, $\eta_{2} = -\left(\kappa_{1}\zeta_{2} + \frac{\kappa_{2}\zeta_{2}}{n^{2}}\right)$, $ \eta_{3} = \frac{\kappa_{2}\zeta_{1}}{n}$, $\eta_{4} = - \frac{\kappa_{2}\zeta_{2}}{n^{2}}$, and $\eta_{5} = - \frac{2\kappa_{2}\zeta_{2}}{n^{2}}$.

The payoff function is also given as $\Theta_{i}(w_{i}(CT),w_{-i}(CT))$, where $w_{-i}(CT)$ is the strategy vector of $\{\mathcal{V}^{N}_{j}\}$, $\forall j \in \mathcal{V}^{N}$, and $j \neq i$. The condition for a strategy vector $w^{*}(CT)$  to be the Nash equilibrium strategy $\forall \; \mathcal{V}^{N}_{i} \in \mathcal{V}^{N}$ applied in the period $[(C-1)T,CT]$ is given in~\eqref{nashcond}.
\begin{equation}\label{nashcond}
 \Theta_{i}(w^{*}_{i}(CT),w^{*}_{-i}(CT)) \geq \Theta_{i}(w_{i}(CT),w^{*}_{-i}(CT)) 
\end{equation}
We define the condition for Nash equilibrium for the payoff function through the result given in Lemma~\ref{lemma1}.

% For the component $U_{i,2}$, the variable is $\log_{2} \Upsilon_{i}(CT)$.
% \begin{equation}\label{util2}
%      U_{i,2} = \eta_{1}\log_{2} \Upsilon_{i}(CT) - \eta_{2}(\log_{2} \Upsilon_{i}(CT))^{2}
% \end{equation}
% The payoff for the game is given below.
% \begin{equation}\label{gameutil}
%      U = \sum_{i \in \mathcal{V}_{N}} U_{i} = \kappa_{1} \sum_{i \in \mathcal{V}_{N}} U_{i,1} + \kappa_{2} \sum_{i \in \mathcal{V}_{N}} U_{i,2}
% \end{equation}
\begin{lem} \label{lemma1}
  For the non-cooperative game defined in~\eqref{game} with a bounded strategy set and fixed number of users for a DAG-based DLT network, if $\frac{\zeta_{1}}{2\zeta_{2}} =  K_{cross}$, then the strategy $w_{i}(CT) =  K_{cross} $ is the Nash equilibrium $\forall \mathcal{V}^{N}_{i} \in \mathcal{V}^{N}$.
\end{lem}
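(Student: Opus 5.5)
We prove the lemma by checking the best response of a single player against the others. The payoff $\Theta_{i}$ in~\eqref{derutil} is a quadratic function of player $i$'s own strategy $w_{i}(CT)$, with the strategies $w_{-i}(CT)$ of the other players treated as parameters. The coefficient of $w_{i}^{2}(CT)$ is $\eta_{2} = -\left(\kappa_{1}\zeta_{2} + \frac{\kappa_{2}\zeta_{2}}{n^{2}}\right)$. Since $\kappa_{1},\kappa_{2} > 0$, this is negative whenever $\zeta_{2} > 0$. We take $\zeta_{2}>0$ as implicit, because $U_{i}$ is meant to be diminishing in the same way as $f^{w}_{i}$. Hence $\Theta_{i}(\cdot, w_{-i}(CT))$ is strictly concave on the bounded strategy set $[0,\Lambda]$. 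Its unique maximizer is therefore the stationary point, provided that point lies in $[0,\Lambda]$. By~\eqref{netcap}, $K_{cross} \leq \Lambda/n \leq \Lambda$, so the interior condition holds. Checking~\eqref{nashcond} thus reduces to the first-order condition $\partial \Theta_{i}/\partial w_{i}(CT) = 0$ at $w^{*}$.

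The main step is the differentiation. We work from the original form $\Theta_{i} = \kappa_{1}U_{i} + \kappa_{2}U_{av}$ rather than from the expanded coefficients $\eta_{k}$. The expansion in~\eqref{derutil} has notational ambiguities: the ``$\cdots$'' in~\eqref{avutil1}, and a cross term indexed by $i,j$. Working from~\eqref{util} avoids relying on them. We read $U_{av}$ as $\zeta_{1}\bar{w} - \zeta_{2}\bar{w}^{2}$, where $\bar{w} = \frac{1}{n}\sum_{j} w_{j}(CT)$. This reading is consistent with the stated values of $\eta_{1},\eta_{3},\eta_{5}$. We then obtain
\begin{equation*}
\frac{\partial \Theta_{i}}{\partial w_{i}(CT)} = \kappa_{1}\left(\zeta_{1} - 2\zeta_{2}w_{i}(CT)\right) + \frac{\kappa_{2}}{n}\left(\zeta_{1} - 2\zeta_{2}\bar{w}\right).
\end{equation*}
At the candidate profile $w_{j}(CT) = K_{cross}$ for all $j$, we have $\bar{w} = K_{cross}$. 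Using the hypothesis $\zeta_{1} = 2\zeta_{2}K_{cross}$, both brackets vanish. The first-order condition therefore holds for every $i$ simultaneously.

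Combining this with strict concavity, $w_{i}(CT)=K_{cross}$ is the unique best response of each player to $w^{*}_{-i}(CT)$. This gives~\eqref{nashcond}, and so the symmetric profile is a Nash equilibrium.

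If one insists on the $\eta$-form, the same argument goes through by computing $\eta_{1} + 2\eta_{2}w_{i} + \eta_{5}\sum_{j\neq i}w_{j}$. Substituting $w_{j} = K_{cross}$ and $\zeta_{1} = 2\zeta_{2}K_{cross}$, the $\kappa_{1}$ terms cancel exactly. The $\kappa_{2}$ terms cancel only under the consistent reading of the averaged quadratic term.

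The main obstacle is this bookkeeping: we must fix the precise meaning of $U_{av}$, check that it matches the $\eta_{k}$, and confirm the sign of $\zeta_{2}$. Once those are settled, the concavity and first-order argument is routine.

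Integrality is not an obstacle. The strategies are transaction counts, $K_{cross} \in \mathbb{N}$, and the maximizer of a concave function over the reals lies in $\mathbb{N}$. It therefore remains a maximizer over the integer points of $[0,\Lambda]$.
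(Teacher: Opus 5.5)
Your argument is correct, but it takes a different route from the paper's.

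The paper first invokes Rosen's existence theorem: compact convex strategy set, plus negative definiteness of the matrix $[\partial^{2}\Theta_{i}/\partial w_{i}\partial w_{j}]$. It then derives explicit best-response functions, solves them for $n=2$, and passes to general $n$ by an induction from $n$ to $n+1$ players. The aim is to show that $K_{cross}$ is the only equilibrium.

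You instead fix the candidate profile and use strict concavity of $\Theta_{i}$ in $w_{i}(CT)$. The assumption $\zeta_{2}>0$ that you flag is genuinely needed, and the paper also assumes it tacitly. You then check the first-order condition for all $n$ at once from $\Theta_{i}=\kappa_{1}U_{i}+\kappa_{2}U_{av}$ with $U_{av}=\zeta_{1}\bar w-\zeta_{2}\bar w^{2}$. This reading reproduces all five $\eta_{k}$, not just three, provided the $\eta_{5}$-sum runs over unordered pairs.

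Your route has three advantages:
\begin{itemize}
\item No existence theorem is needed once the candidate is verified.
\item You avoid the paper's induction step, whose reduction mixes the best-response relations of the $n$- and $(n+1)$-player games.
\item Corollary~\ref{cor1} becomes immediate, since the $\kappa_{1}$- and $\kappa_{2}$-brackets vanish separately.
\end{itemize}

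What your route does not give is uniqueness. The paper's proof explicitly sets out to establish it, and the phrase ``the Nash equilibrium'' (together with the conclusion's ``unique pure Nash equilibrium'') suggests it is intended. One more line closes this. The Jacobian of the pseudo-gradient $(\partial\Theta_{i}/\partial w_{i})_{i}$ is $-2\zeta_{2}\left(\kappa_{1}I+\frac{\kappa_{2}}{n^{2}}J\right)$, where $J$ is the $n\times n$ all-ones matrix; this is exactly the paper's $H_{s}$. It is symmetric negative definite, so Rosen's diagonal strict concavity condition gives a unique equilibrium on $[0,\Lambda]^{n}$, boundary profiles included. That equilibrium must be the one you verified.
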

\begin{proof}
    The detailed proof of the Lemma is given in Appendix~\ref{app:A}.
\end{proof}
Also, for the condition described in Lemma~\ref{lemma1}, the Nash equilibrium does not depend on the value $\kappa_{R}$, provided $\kappa_{2} \neq 0$. Detailed proof of the same is given in the form of corollary~\ref{cor1} in Appendix~\ref{app:A}.

From Lemma~\ref{lemma1}, we concur that the best scenario to utilize maximum network potential with minimum total computational resources across the system is when every node $\in \mathcal{V}^{N}$ issue $K_{cross}$ transactions in a time period $T$. If a node attempts to cross the given limit and cut another node's share of transactions, he gets penalized in the subsequent period. The penalty is a disproportionately higher difficulty level as per~\eqref{newdifflevel}. Therefore, it is in the best interest of every node to add transactions less than or equal to $K_{cross}$. Now, as we move forward, we discuss the case of the system with a dynamic set of nodes.
\subsection{Nash equilibrium for variable node set}
The equilibrium condition proved in Lemma~\ref{lemma1} is based on the fixed number of players in the game. However, for the given decentralized system, any new node can join at any point. We prove that the requirement of model-prescribed behavior does not change. As we proceed, we will demonstrate the consistency of the requirement of prescribed behavior through the proof of uniform shift of the Nash equilibrium. 

The non-cooperative game for shift in the average capacity $\Gamma_{(C-1)T}^{CT}$ in a variable node set is defined in~\eqref{vargame}, where $\mathcal{V}^{(N,(C-1)T)}$ is the effective set of players in the period $[(C-1)T,CT]$ with $|\mathcal{V}^{(N,(C-1)T)}| = n_{1}$, while $\mathcal{V}^{(N,CT)}$ is the effective user set for the period $[CT, (C+1)T]$ with $|\mathcal{V}^{(N,CT)}| = n_{2}$. The change from $n_{1}$ to $n_{2}$ occurs in the period $[(C-1)T, CT]$, but comes to effect from the period $[CT,(C+1)T]$. The system dynamics for the updated game are defined in~\eqref{vardimfunc}, \eqref{varprevnetcap} and~\eqref{varupnetcap}; in addition to the ones for the game defined in~\eqref{game} with the fixed set of players. The change in user set leads to the shift in average capacity in the period $[CT, (C+1)T]$. The difficulty level allotment is based on the node strategies with an updated set applied in the period $[(C+1)T, (C+2)T]$. Therefore, the Nash equilibrium gets updated for the period $[CT, (C+1)T]$.
\begin{equation}\label{vargame}
\begin{aligned}
    \Gamma_{(C-1)T}^{CT}  \triangleq & \{\mathcal{V}^{(N,(C-1)T)}, \mathcal{V}^{(N,CT)}, (w_{i}(CT))_{i \in \mathcal{V}^{(N,(C-1)T)}},\\ &(w_{i}((C+1)T))_{i \in \mathcal{V}^{(N,CT)}},\\
    &(\Theta^{(C-1)T}_{i})_{i \in \mathcal{V}^{(N,(C-1)T)}},(\Theta^{CT}_{i})_{i \in \mathcal{V}^{(N,CT)}} \}
    \end{aligned}
\end{equation}

For the given game, the set of players in successive periods, as well as the payoffs of successive periods, are the primary elements. We assess the shift in the Nash equilibrium through the result in Lemma~\ref{lemma2}.
\begin{lem}\label{lemma2}
The shift in the Nash equilibrium for the game defined in~\eqref{vargame}, when the number of users change from $n_{1}$ to $n_{2}$ during the period $[(C-1)T,CT]$ will be equal to $\left|\frac{\zeta_{1}(CT)}{2\zeta_{2}(CT)} - \frac{\zeta_{1}((C-1)T)}{2\zeta_{2}((C-1)T)}\right|$, where $\frac{\zeta_{1}(CT)}{2\zeta_{2}(CT)} = K_{cross}^{CT}$ and $\frac{\zeta_{1}((C-1)T)}{2\zeta_{2}((C-1)T)} = K_{cross}^{(C-1)T}$.
\end{lem}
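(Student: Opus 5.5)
The plan is to reduce the variable-node game \eqref{vargame} to two consecutive copies of the fixed-node game \eqref{game} and apply Lemma~\ref{lemma1} to each. In the period $[(C-1)T,CT]$ the effective player set is $\mathcal{V}^{(N,(C-1)T)}$ with $n_{1}$ players and parameters $\zeta_{1}((C-1)T),\zeta_{2}((C-1)T)$. From $[CT,(C+1)T]$ onward the player set is $\mathcal{V}^{(N,CT)}$ with $n_{2}$ players and updated parameters $\zeta_{1}(CT),\zeta_{2}(CT)$. The updated parameters are calibrated to $\Lambda^{CT}$ through \eqref{varupnetcap}, in the same way the old ones are calibrated to $\Lambda^{(C-1)T}$ through \eqref{varprevnetcap}.

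\textbf{Step 1: the game decouples by period.} I would observe that the payoff $\Theta^{(C-1)T}_{i}$ depends only on the strategy vector $w(CT)$. Likewise $\Theta^{CT}_{i}$ depends only on $w((C+1)T)$, because difficulty levels and payoffs are assigned one period ahead and are recomputed from scratch each period via \eqref{newdifflevel} and \eqref{vardimfunc}. Consequently a strategy profile is a Nash equilibrium of $\Gamma^{CT}_{(C-1)T}$ if and only if its restriction to each period satisfies \eqref{nashcond} for that period's game.

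\textbf{Step 2: apply Lemma~\ref{lemma1} in each period.} Each period's game has the quadratic form \eqref{derutil}, with $\eta$-coefficients built from $n_{1}$ (respectively $n_{2}$) and from the corresponding $\zeta$'s. Within a period the number of players is fixed, so Lemma~\ref{lemma1} applies directly. Setting $\partial\Theta_{i}/\partial w_{i}=0$ in \eqref{derutil} and using symmetry gives the stationary point $\zeta_{1}/(2\zeta_{2})$ independently of $\kappa_{1},\kappa_{2}$ and $n$, as in the corollary referenced after Lemma~\ref{lemma1}. Concavity follows from $\eta_{2}<0$. Hence the equilibria are $w^{*}_{i}(CT)=\zeta_{1}((C-1)T)/(2\zeta_{2}((C-1)T))=K^{(C-1)T}_{cross}$ for all $i\in\mathcal{V}^{(N,(C-1)T)}$, and $w^{*}_{i}((C+1)T)=\zeta_{1}(CT)/(2\zeta_{2}(CT))=K^{CT}_{cross}$ for all $i\in\mathcal{V}^{(N,CT)}$.

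\textbf{Step 3: compute the shift.} Both equilibria are symmetric, so every continuing player's equilibrium strategy moves by the same amount. A newly joining player is best compared against the symmetric profile, and a departing player simply drops out. The shift is therefore uniform, and its magnitude is $|K^{CT}_{cross}-K^{(C-1)T}_{cross}|$, which equals the stated expression.

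The main obstacle is Step 1, namely justifying that there is no cross-period coupling. If a player's payoff in $[CT,(C+1)T]$ also depended on its earlier strategy $w_{i}(CT)$, the equilibrium could in principle fail to separate. I would handle this by pointing to the definition: $d_{i}(CT)$ uses only $w_{i}(CT)$, and the parameters $K^{CT}_{0},K^{CT}_{1},K^{CT}_{2}$ are reset at $CT$. Consequently $\Theta^{CT}_{i}$ has no dependence on $w(CT)$.

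A second, minor point is that Lemma~\ref{lemma1} is stated for a bounded strategy set. I would check that $K^{CT}_{cross}\le\Lambda^{CT}$ follows from \eqref{varupnetcap}, so that the interior stationary point is admissible in the new period.
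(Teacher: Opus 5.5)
Your proposal is correct and follows essentially the same route as the paper: treat each period as a fixed-node game, use Lemma~\ref{lemma1} to place that period's equilibrium at $\zeta_{1}/(2\zeta_{2})=K_{cross}$, and take the difference. The paper solves the best-response system explicitly for $n_{1}=2$, $n_{2}=3$ with $\kappa_{1}=\kappa_{2}$ before citing the general best-response formulas from Lemma~\ref{lemma1}, whereas you make the period decoupling explicit and read the equilibrium off the symmetric first-order condition, which works for any $\kappa_{1},\kappa_{2}>0$.
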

\begin{proof}
    Detailed proof of the Lemma is given in Appendix~\ref{app:B}.
\end{proof}

From Lemma~\ref{lemma2}, we concur that in the instance of change in the user set, the shift in the Nash equilibrium is uniform across the system. No particular node gets an advantage in terms of preferred transaction limit due to an increase or decrease in the cardinality of the user set. Therefore, our proposed PoW difficulty level model ensures the continuity of the enforcement of prescribed behavior for a dynamic set of nodes. As we proceed, we explain how our model works through a set of solved examples.

\section{Numerical Examples}
\label{sec:example}
In this section, we demonstrate the efficacy of our model through numerical examples along with the analysis of the IOTA's model. Note that the network in question is decentralized, so there is no administrative authority to regulate the flow or restrict the offenders. 

\subsubsection*{\textbf{Example 1 (Analysis of IOTA congestion control model)}}
In the IOTA's model given in~\eqref{iotapow}, the range for
$\gamma_{i}$ is obtained through either of the two ways, normalization or relativity. Either way, this parameter assigns a node $\mathcal{V}^{N}_{i}$ to have the $\gamma_{i}$ in the range $[0,1]$. As high as the node spending is in the preceding time period, the lower its value will be, i.e., closer to $0$. 

For comparison, suppose we have two users $\mathcal{V}^{N}_{i}, \mathcal{V}^{N}_{j} \in \mathcal{V}^{N}$ as part of the network. Suppose the previous spending of $\mathcal{V}^{N}_{i}$ enables it to have $\gamma_{i} = 0.1$, while the same for $\mathcal{V}^{N}_{j}$ is $\gamma_{j} = 0.7$.
Now, if the number of transactions for $i$ and $j$ be $a_{i}(t) = a_{j}(t) = 10$, then we have the respective difficulty levels as $d_{i}(t) = d_{0} + 1$ and $d_{j}(t) = d_{0} + 7$.

The increment in $d_{i}(t)$ by a factor of $1$ leads to a doubling of the average computational efforts required. Due to the above results, $\mathcal{V}^{N}_{i}$ is able to put through its transactions almost instantly because of negligible change in assigned difficulty level. On the other hand, it becomes computationally very expensive for $\mathcal{V}^{N}_{j}$ to push its transactions instantaneously. In the above case, $\mathcal{V}^{N}_{j}$ has to wait for some time so that the value $d_{j}(t)$ comes down to a reasonable level for it. Therefore, the linearly increasing difficulty level provides respite to high stakeholders but discriminates against the users with low resources and spending power.

\subsubsection*{\textbf{Example 2}}
Suppose we have a network of $10$ users with the set given as $\mathcal{V}^{N} = \{\mathcal{V}^{N}_{1},\cdots, \mathcal{V}^{N}_{10}\}$. At present, we are proceeding with the assumption that neither a new node joins nor an existing node leaves. The time period for updating the reputation and level is $T = 10$ seconds. Suppose the network capacity is $100$ transactions per second, leading to the values $\Lambda = 1000$ transactions, $K_{cross} = 100$, and $K_{add} = 0$ for the period $T$. Also, let the current period be $[4T,5T]$. We form the transaction frequency-based reputation component $f^{w}_{i}(w_{i}(5T))$ for the node $\mathcal{V}^{N}_{i} \in \mathcal{V}^{N}$ by taking $K_{cross}$ with an invalid transaction value $-20$ shown in~\eqref{dimfuncex}. 
\begin{equation}\label{dimfuncex}
    f^{w}_{i}(w_{i}(5T)) = 1 + 0.04w_{i}(5T) - 0.0005w_{i}^{2}(5T) 
\end{equation}
The figure~\ref{repfig} demonstrates the variation of $f^{w}_{i}(w_{i}(5T))$ for different $\Lambda$ values for $10$ node system. We choose to proceed with the scenario of $\Lambda = 1000$ for the reputation model $ f^{w}_{i}(w_{i}(5T))$ in~\eqref{dimfuncex}.
\begin{figure}
\includegraphics[width=9.25cm]{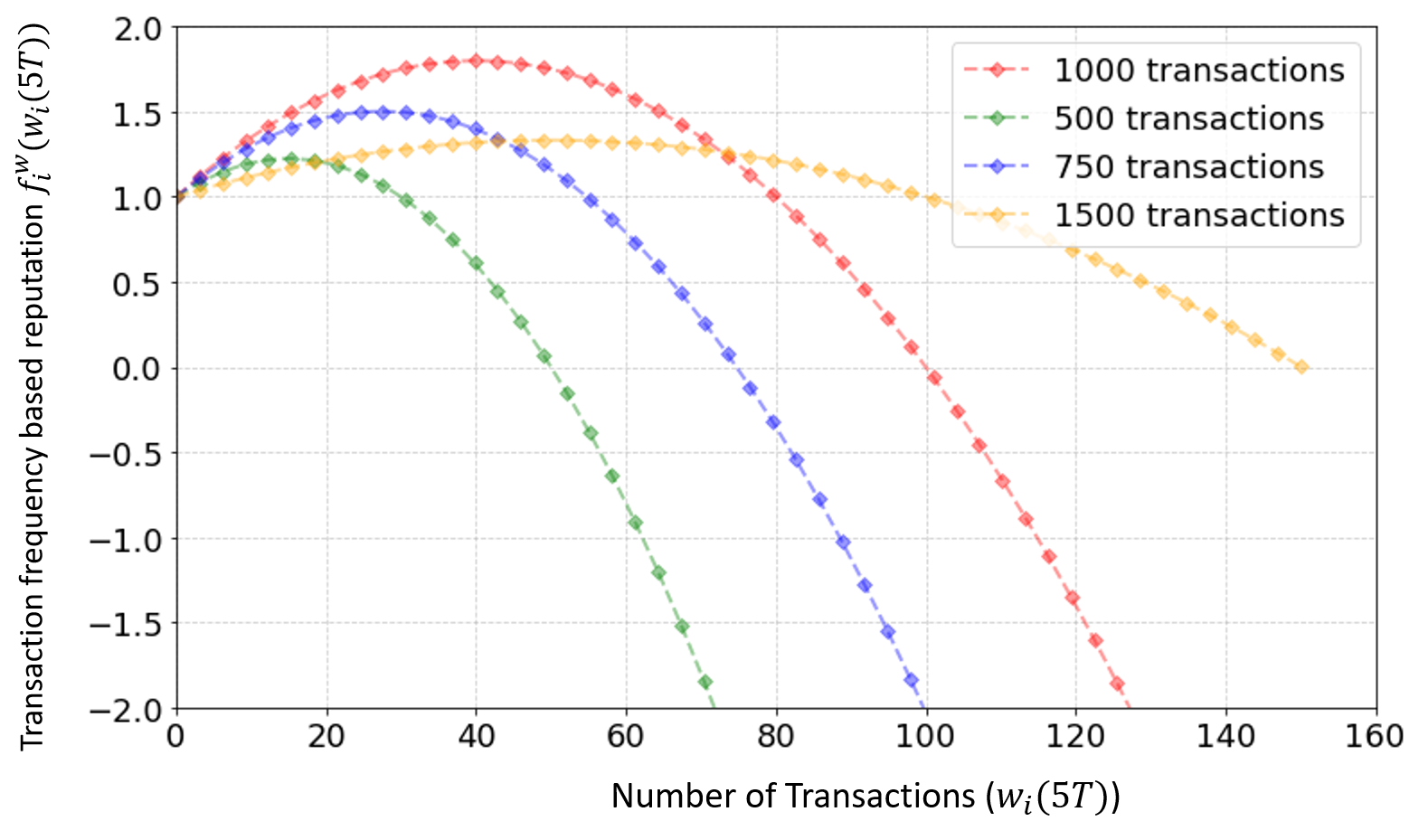}
\centering
\caption{Variation of transaction frequency based reputation}
\label{repfig}
\end{figure}
The value of $K_{tan} = 40$. For $0 < w_{i}(5T) \leq 100$, the difficulty level assigned for the period $[5T,6T]$ is represented in~\eqref{ex2eq1} by $d_{i}(5T) = d_{0} - \lfloor (K_{1} - 2K_{2}w_{i}(CT)) \rfloor$
\begin{equation}\label{ex2eq1}
    d_{i}(5T) = d_{0} - \lfloor (0.04 - 0.001w_{i}(5T)) \rfloor
\end{equation}
For $w_{i}(5T) \leq 40$, $d_{i}(5T) = d_{0}$. For $40 < w_{i}(5T) \leq 100$, $d_{i}(5T) = d_{0} + 1$.

For $w_{i}(5T) > 100$, the difficulty level is given in~\eqref{ex2eq2}.
\begin{equation}\label{ex2eq2}
    d_{i}(5T) = d_{0} + 1 + \lceil f^{B}_{i}(5T) \times (-1 - 0.04w_{i}(CT) + 0.0005w_{i}^{2}(CT)) \rceil
\end{equation}
For $w_{i}(5T) = 150$, the difficulty level is given in~\eqref{ex2eq3}.
\begin{equation}\label{ex2eq3}
    d_{i}(5T) = d_{0} + 1 + \lceil f^{B}_{i}(5T) \times 4.25 \rceil
\end{equation}
For $f^{B}_{i}(5T) = 0.2$, the above expression will be equal to $d_{0} + 2$. It means that $\mathcal{V}^{N}_{i}$ has to do $80 \%$ of token spending during the period $[4T,5T]$ for given level. Now, we have 10 users in the network, so the normal spending ratio is expected to be $10 \%$. For this ratio, we have $f^{B}_{i}(5T) = 0.9$ and subsequently $d_{i}(5T) = d_{0} + 1+ 4 = d_{0} + 5$.

For $w_{i}(5T) = 125$, the user $\mathcal{V}^{N}_{i}$ has to do $90 \%$ of the spending during the period $[4T,5T]$ to keep the difficulty level at $d_{i}(5T) = d_{0} + 2$. For $10 \%$ spending, $d_{i}(5T) = d_{0} + 3$. The analysis above showing the variation of difficulty level with a transaction-based reputation for different proportions of spending is plotted in figure~\ref{difflevfig} for the base level $d_{0} = 2$.  
\begin{figure}
\includegraphics[width=9.25cm]{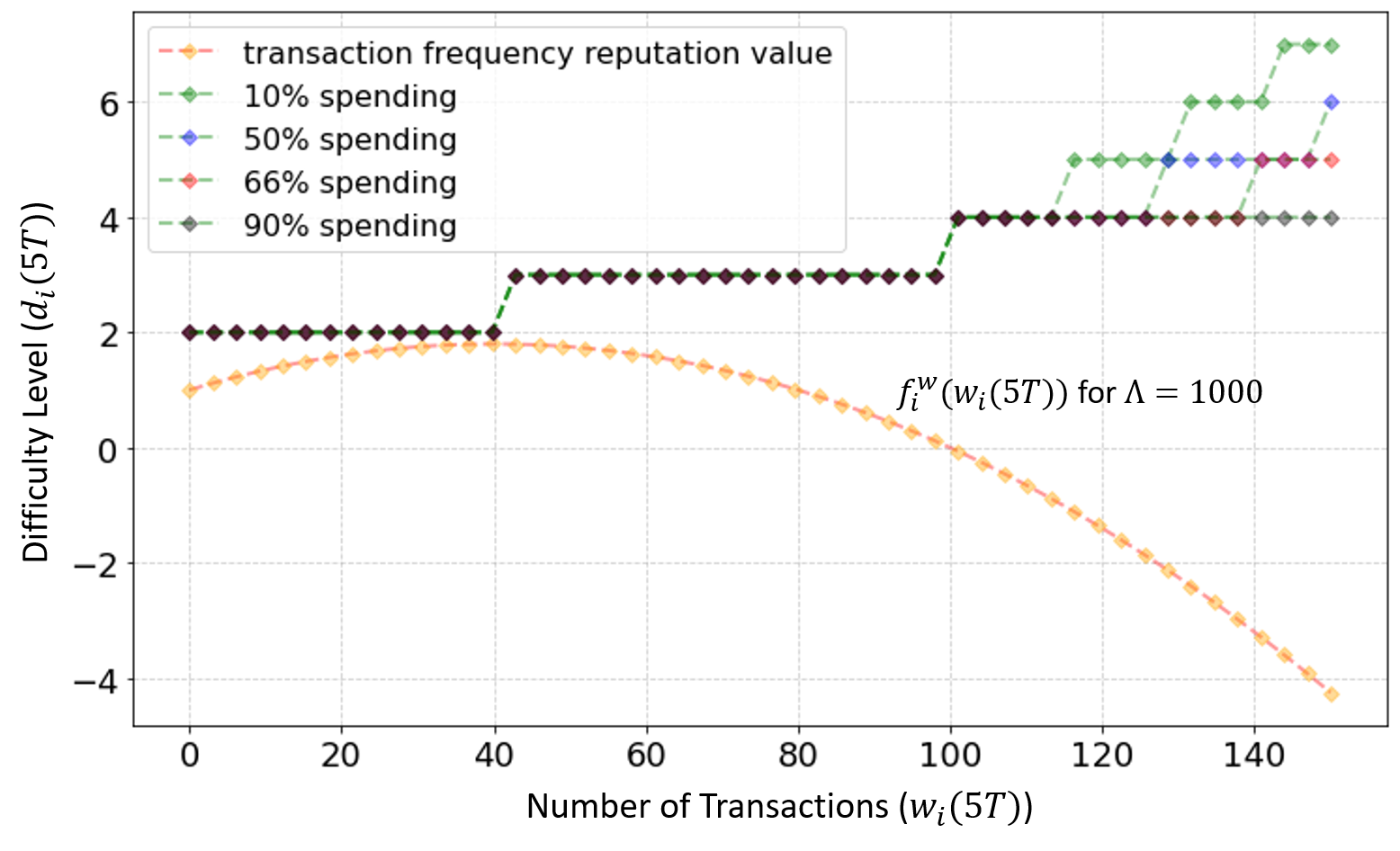}
\centering
\caption{Variation of difficulty level with transaction frequency}
\label{difflevfig}
\end{figure}
\subsubsection*{\textbf{Example 3}}To show the efficacy of prescribed behavior, we consider a system with 2 users $\mathcal{V}^{N}_{i}, \mathcal{V}^{N}_{j} \in \mathcal{V}^{N}$ in the network with network capacity $\Lambda = 210$, we have $K_{cross} = 100$ and $K_{add} = 10$. The time period for the difficulty level allotment is $[4T, 5T]$ based on transaction frequency in the period $[3T,4T]$. From Lemma~\ref{lemma1} and corollary~\ref{cor1}, we know that the values $\kappa_{1},\kappa_{2} \in \Re^{+}$ do not have effect on the outcome. So we let the share of outcome for the individual and average strategy be equal, i.e., $\kappa_{1} = \kappa_{2} = 0.5$. For $\frac{\zeta_{1}}{2\zeta_{2}} =  100$, let $\zeta_{1} = 100$ and $\zeta_{2} = 0.5$. The payoff function $\Theta_{i}$ for $\mathcal{V}^{N}_{i}$ with the calculated parameter values is given in~\eqref{payoffex1}. In figure~\ref{payoffmeshfig}, the variation of $\Theta_{i}$ for different values of $w_{i}(4T)$ and $w_{j}(4T)$ is provided on a $3D$ plane with peak value of $5000$. 
%Both figures conclude that the peak utility occurs when each node has transaction frequency $w_{i}(4T) = w_{j}(4T) = K_{cross}$.
\begin{equation}\label{payoffex1}
\begin{split}
    \Theta_{i} =75 w_{i}(4T) - 0.3125 w^{2}_{i}(4T) + 25  w_{j}(4T)\\
    - 0.0625  w^{2}_{j}(4T)  - 0.125 w_{i}(4T) w_{j}(4T) 
    \end{split}
\end{equation}
\begin{figure}
\includegraphics[width=9.25cm]{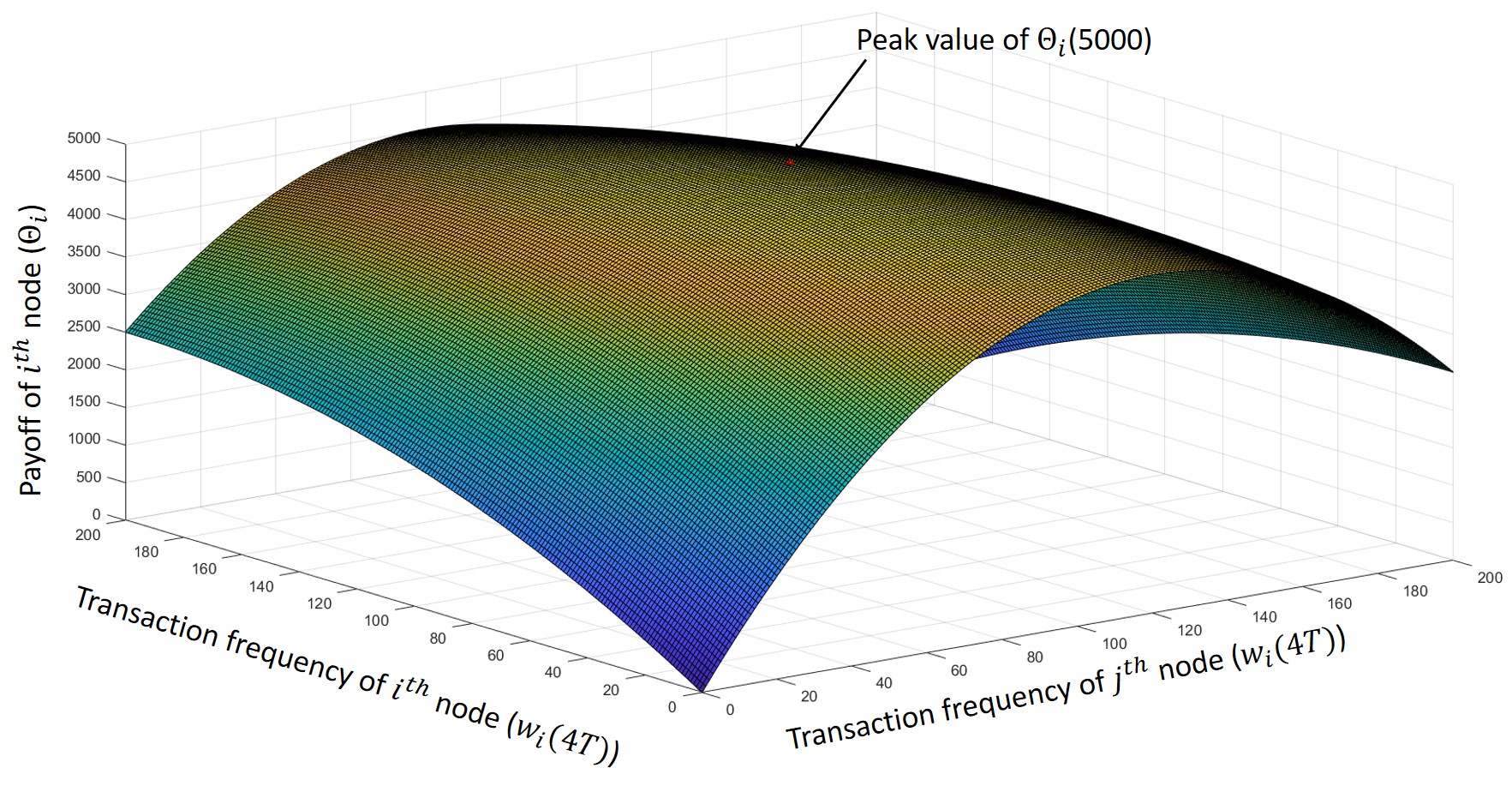}
\centering
\caption{Variation of $\mathcal{V}^{N}_{i}$ payoff ($\Theta_{i}$)}
\label{payoffmeshfig}
\end{figure}
A similar payoff function can also be computed for $\mathcal{V}^{N}_{j}$. From the payoff function, we derive the best response function for $\mathcal{V}^{N}_{i}$ as given in~\eqref{brfuncex3}. Similar payoff and best response functions can also be computed for $\mathcal{V}^{N}_{j}$. The existence of Nash equilibrium is demonstrated in figure~\ref{brfunc2userfig}, obtained at the point of intersection. 
\begin{equation}\label{brfuncex3}
b_{i}(w_{j}(4T)) =  120 -  0.2w_{j}(4T) 
\end{equation}
\begin{figure}
\includegraphics[width=9.25cm]{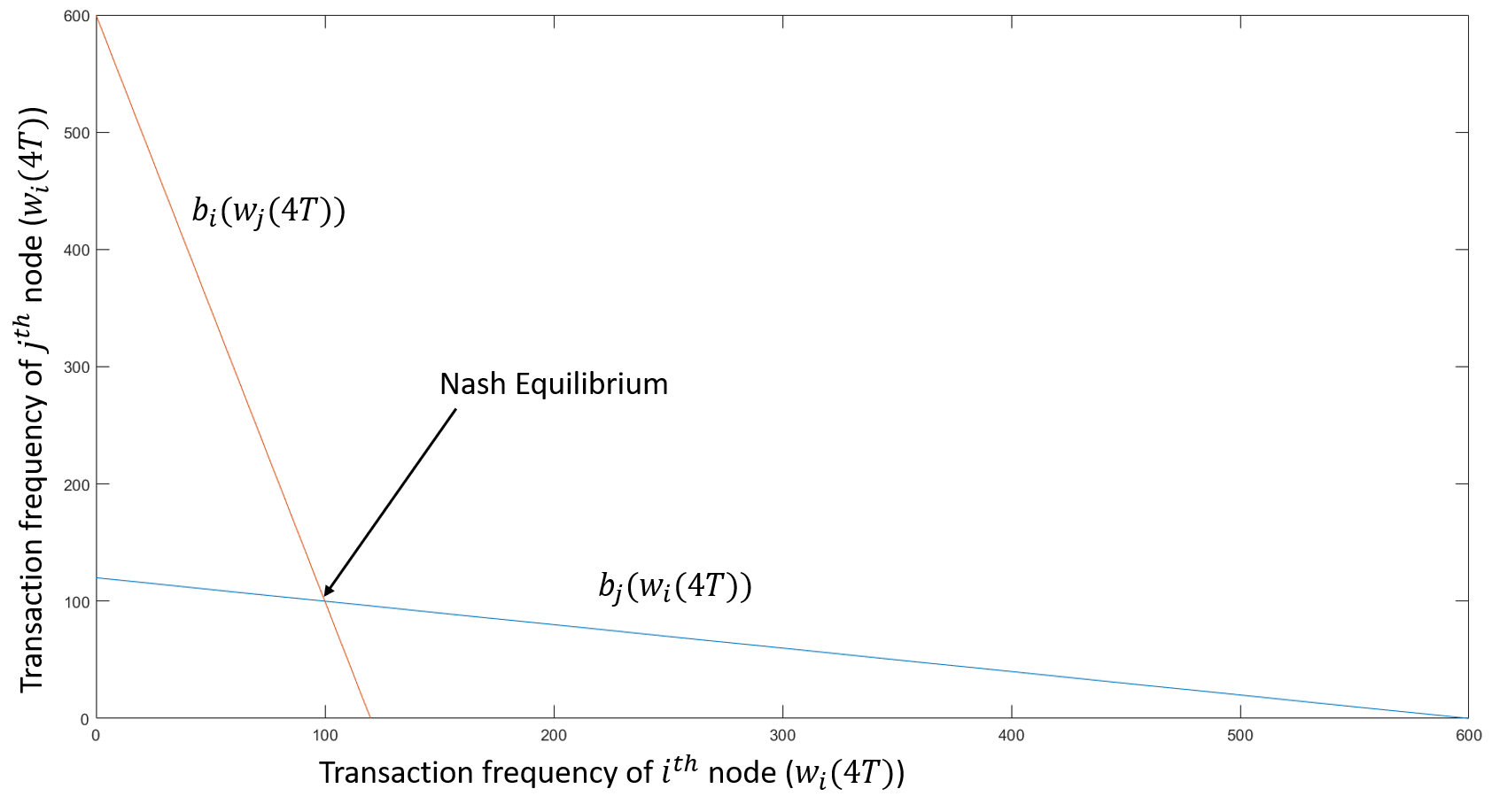}
\centering
\caption{Best response functions of $\mathcal{V}^{N}_{i}$ and $\mathcal{V}^{N}_{j}$}
\label{brfunc2userfig}
\end{figure}
From the best response functions, the Nash equilibrium is found to be $100$, i.e., equal to $K_{cross}$.

\subsubsection*{\textbf{Example 4}}
After showing the efficacy of prescribed behavior through the establishment of Nash equilibrium, we proceed to show that the change in the set of nodes does not lead to any deviation. For 2 users $\mathcal{V}^{N}_{i}, \mathcal{V}^{N}_{j} \in \mathcal{V}^{(N,4T)}$ in the network with network capacity $\Lambda^{4T} = 210$ and $n_{1} = 2$, we have $K^{4T}_{cross} = 100$ for the period $[3T,4T]$. Subsequently, $\frac{\zeta_{1}(4T)}{2\zeta_{2}(4T)} = 100$  can be taken as $\zeta_{1}(4T) = 100$ and $\zeta_{2}(4T) = 0.5$.
When a new user $\mathcal{V}^{N}_{k}$ joins the system between this period, the user set now becomes $\mathcal{V}^{N}_{i}, \mathcal{V}^{N}_{j}, \mathcal{V}^{N}_{k} \in \mathcal{V}^{(N,5T)}$ with $n_{2} = 3$ effective from the period $[4T,5T]$. Suppose the updated network capacity is $\Lambda^{5T} = 250$. The updated $K^{5T}_{cross}$ will be $80$. Subsequently, $\frac{\zeta_{1}(5T)}{2\zeta_{2}(5T)} = 80$  can be written as $\zeta_{1}(5T) = 80$ and $\zeta_{2}(5T) = 0.5$. For $\kappa_{1} = \kappa_{2} = 0.5$, the payoff function $\Theta^{4T}_{i}$ for $\mathcal{V}^{N}_{i}$ will be same as given in~\eqref{payoffex1}. 
%Substituting the values, we get the game parameter values given as $\eta_{1}(5T) = \frac{160}{3}$, $\eta_{2}(5T) = -\frac{2.5}{9}$, $\eta_{3}(5T) = \frac{40}{3}$, $\eta_{4}(5T) = - \frac{0.25}{9}$, and $\eta_{5}(5T) = - \frac{0.5}{9}$.

The updated payoff function $\Theta^{5T}_{i}$ for $n_{2} = 3$ is given in~\eqref{ex4payoff}. The distribution of the same is given in figure~\ref{payoff3userfig}. The distribution increases towards the higher end between the range of $70-90$. To find out the new Nash equilibrium with the shift, we derive and plot the updated best response functions for the set $\mathcal{V}^{(N,4T)}$ and $\mathcal{V}^{(N,5T)}$.
\begin{equation}\label{ex4payoff}
\begin{aligned}
   & \Theta^{5T}_{i} = \frac{160}{3} w_{i}(5T) -\frac{2.5}{9} w^{2}_{i}(5T)\\ &+ \frac{40}{3} (w_{j}(5T)+w_{k}(5T))
    - \frac{0.25}{9} (w^{2}_{j}(5T) +w^{2}_{k}(5T)) \\
   & - \frac{0.5}{9}( w_{i}(5T) w_{j}(5T) +w_{j}(5T) w_{k}(5T) 
   + w_{k}(5T) w_{i}(5T)) 
    \end{aligned}
\end{equation}
\begin{figure}
\includegraphics[width=9.25cm]{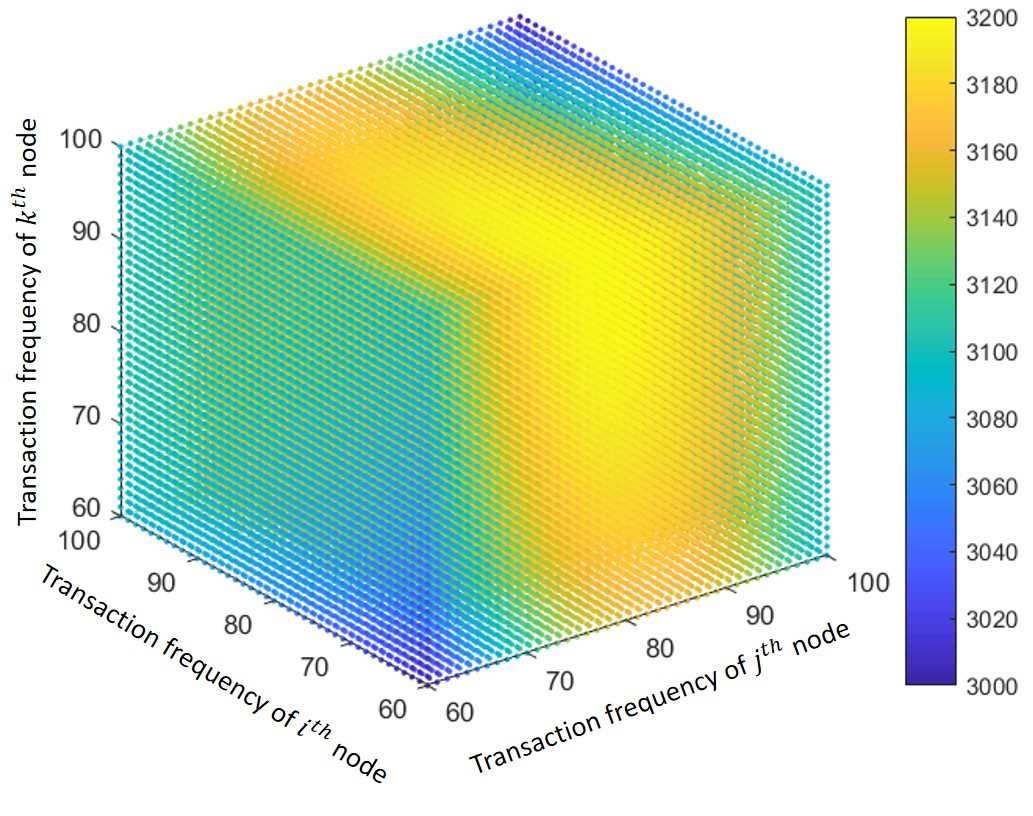}
\centering
\caption{Updated payoff of $\mathcal{V}^{N}_{i}$ for $n_{2} = 3$ node system}
\label{payoff3userfig}
\end{figure}
Now, we obtain the best response function for $\mathcal{V}^{N}_{i}\in \mathcal{V}^{(N,5T)}$ as shown in~\eqref{ex4brfunc}. By extension the same for $\mathcal{V}^{N}_{j}$ and $\mathcal{V}^{N}_{k}$ are $b^{5T}_{i}(w_{-i}(5T)) =   96 - 0.1(w_{j}(5T) + w_{k}(5T))$ and $b^{5T}_{k}(w_{-k}(5T)) =   96 - 0.1(w_{i}(5T) + w_{j}(5T))$ respectively. The plots for the best response functions with $n_{1} = 2$ and $n_{2} = 3$ are given in figure~\ref{ex4brfuncfig}. 
\begin{equation}\label{ex4brfunc}
b^{5T}_{i}(w_{-i}(5T)) =   96 - 0.1(w_{j}(5T) + w_{k}(5T))
\end{equation}
\begin{figure}
\includegraphics[width=9.25cm]{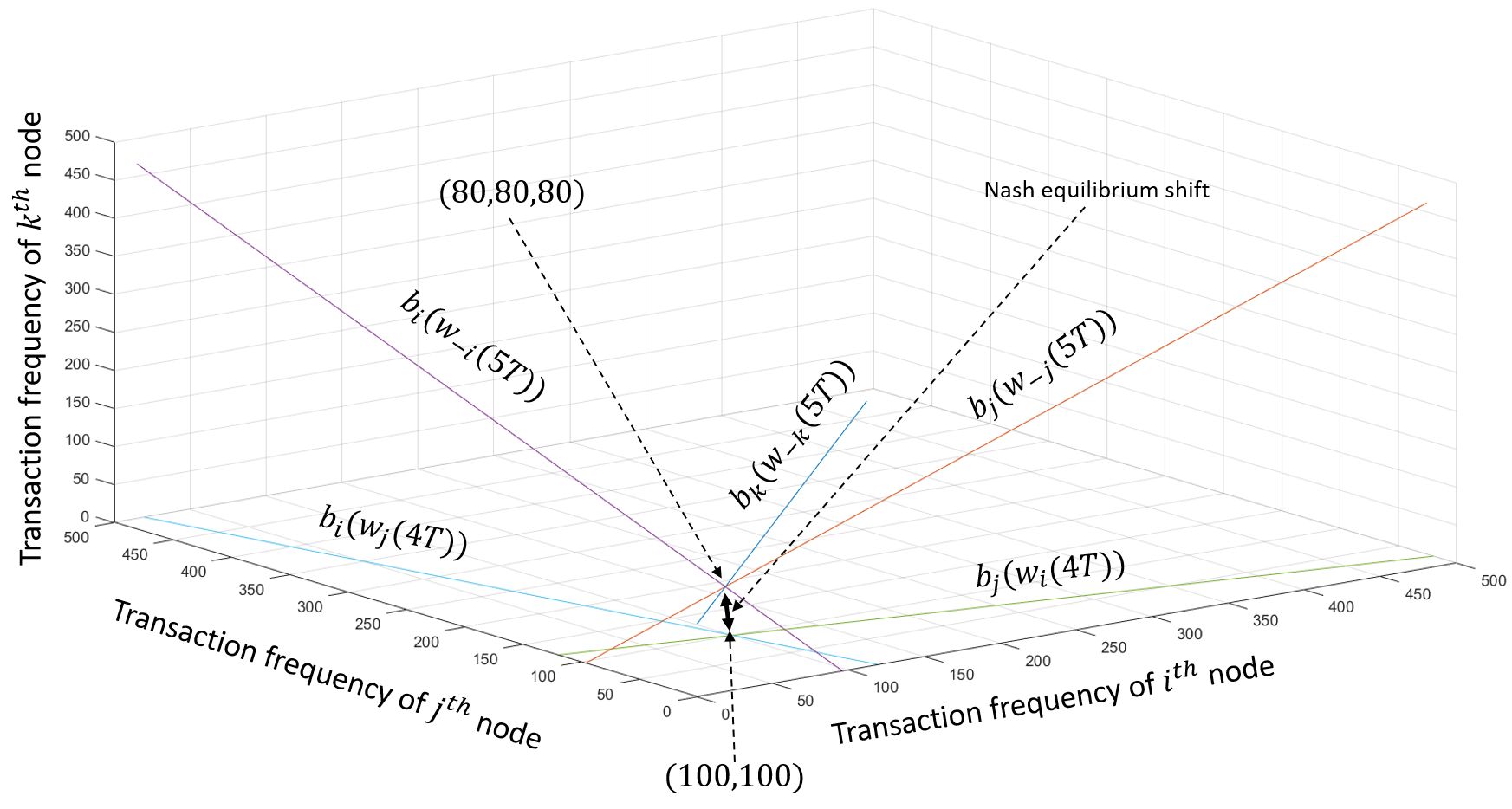}
\centering
\caption{Best response functions for $n_{1} = 2$ and $n_{2} = 3$ node systems}
\label{ex4brfuncfig}
\end{figure}
The unique solution for the set $\mathcal{V}^{(N,5T)}$ is $w_{i}(5T) = w_{i}(5T) = w_{k}(5T) = 80$, which the nash equilibrium for the given period. Therefore the shift in the nash equilibrium for the period $[4T,5T]$ from the period $[3T,4T]$ is $|100 -80| = 20$.

\section{Conclusion}
\label{sec:conc}
In this paper, we formulate a variable PoW model based on user behavior for a DAG-based distributed ledger network. It serves as the proof of attachment for a transaction into the DAG ledger. Our motive is to provide a solution to enable decentralized congestion control for DLT networks. For DLT networks, spamming of transactions by an unscrupulous individual or a group of users is the main cause of network congestion. While it is restricted in blockchain networks through different sets of measures, spamming is a potential problem in DAG-based DLT networks. Hence, we move forward with the PoW methodology for the DAG DLT.

The proposed method is a variable computational cost-based PoW difficulty level. We define the difficulty level for a time period based on the reputation built on transaction frequency and token spending in the previous period. It arranges for every user to have equal opportunities in terms of issuing transactions and establishes a prescribed transaction limit. For users violating the prescribed transaction limits, the model penalizes disproportionately by quadratic increase in the difficulty level value instead of linear. It prompts the nodes to stay within the prescribed limit when adding transactions into DLT. To show the efficacy of cooperative addition of transactions, we formulate the node behavior through a noncooperative game with a payoff based on the transaction frequency strategy. In the game, we establish the existence of a unique pure Nash equilibrium. When every node operates at Nash equilibrium, it utilizes the full potential of the available transaction limit with the minimum possible average computational resource. The model also works for a dynamic set of nodes, where the requirement of a uniform prescribed behavior remains the same. Overall, our proposed PoW difficulty level model is better for a DAG-based distributed ledger than one with a static or linearly increasing difficulty level.

% if have a single appendix:
%\appendix[Proof of the Zonklar Equations]
% or
%\appendix  % for no appendix heading
% do not use \section anymore after \appendix, only \section*
% is possibly needed

% use appendices with more than one appendix
% then use \section to start each appendix
% you must declare a \section before using any
% \subsection or using \label (\appendices by itself
% starts a section numbered zero.)
%

% \appendices
% \section{Proof of Lemma~\ref{lemma1}}
  
% you can choose not to have a title for an appendix
% if you want by leaving the argument blank
% \section{}
% Appendix two text goes here. 

\section*{Acknowledgement}
This work is partially funded by the National Blockchain Project (grant number NCSC/CS/2017518) at IIT Kanpur, sponsored by the National Cyber Security Coordinator's office of the Government of India, the C3i Center funding from the Science and Engineering Research Board of the Government of India (grant number SERB/CS/2016466), and NSCS (National Security Council Secretariat).

\bibliographystyle{elsarticle-num}
% Loading bibliography database
\bibliography{Bibliography}

\appendix
\numberwithin{equation}{section}
\section{Values of $d_{i}(CT)$ for different values of $w_{i}(CT)$}\label{app:list}
\begin{itemize}
    \item The minimum value of $w_{i}(CT)$ at any instance is zero. At $w_{i}(CT) = 0$, $\frac{d f^{w}_{i}(w_{i}(CT))}{d w_{t}(CT)} = K_{1}$, while $f^{w}_{i}(w_{i}(CT)) = K_{0}$. Therefore from~eq\eqref{newdifflevel}, at $w_{i}(CT) = 0$, $d_{i}(CT) = d_{0} -\lfloor K_{1} \rfloor$ is the minimum value of the difficulty level.
    \item As the number of messages increases, the value of $d_{i}(CT)$ increases. At $w_{i}(CT) = \frac{K_{1}}{2K_{2}}$, $\frac{d f^{w}_{i}(w_{i}(CT))}{d w_{t}(CT)} = 0$. At this point, $d_{i}(CT) = d_{0}$.
    \item Afterwards, till $w_{i}(CT) <  K_{cross} $, $d_{i}(CT) = d_{0} + \lfloor 2K_{2}w_{i}(CT) - K_{1} \rfloor$.
    \item Let $\frac{K_{1}}{2K_{2}} = K_{tan}$. For $w_{i}(CT) \leq \lfloor K_{tan} \rfloor$, the difficulty level will be given as $d_{i}(CT) = d_{0} - \left\lfloor \frac{d f^{w}_{i}(w_{i}(CT))}{d w_{t}(CT)} \right\rfloor$ $= d_{0} - \lfloor (K_{1} - 2K_{2}w_{i}(CT)) \rfloor$, where $(K_{1} - 2K_{2}w_{i}(CT)) > 0$.
    \item At $w_{i}(CT) = \lfloor K_{tan} \rfloor$, the difficulty level will become equal to the base level, i.e., $d_{0}$, represented as $d_{i}(CT) = d_{0}$.
    \item For $\lfloor K_{tan} \rfloor < w_{i}(CT) \leq K_{cross}$, the differential term will cause the addition to the base level $d_{0}$ as $(K_{1} - 2K_{2}w_{i}(CT)) < 0$. However, the representation will remain same, i.e., $d_{i}(CT) = d_{0} - \left\lfloor \frac{d f^{w}_{i}(w_{i}(CT))}{d w_{t}(CT)} \right \rfloor$.
    \item For $w_{i}(CT) > K_{cross}$, where the transaction crosses the average suggested limit, the difficulty level equation changes to $d_{i}(CT) = d_{0} - \left\lfloor \frac{d f^{w}_{i}(w_{i}(CT))}{d w_{t}(CT)} \right\rfloor
    + \left\lceil f^{B}_{i}(CT) \times (- min(0,f^{w}_{i}(w_{i}(CT)))) \right\rceil$, also elaborated as equation~\eqref{eq:23}.
\end{itemize}
 
% \begin{equation}
%         d^{\lfloor K_{tan} \rfloor < w_{i}(CT) \leq  K_{cross} }_{i}(CT) = d_{0} - \left\lfloor \frac{d f_{w,i}(CT)}{d w_{t}(CT)} \right
%         \rfloor 
% \end{equation}
% For aforementioned case, $(K_{1} - 2K_{2}w_{i}(CT)) < 0$. The difficulty level equation is represented as equation~\ref{eq:22}.
% \begin{equation}\label{eq:22}
%         d^{\lfloor K_{tan} \rfloor < w_{i}(CT) \leq  K_{cross} }_{i}(CT) = d_{0} + \lfloor (2K_{2}w_{i}(CT) - K_{1}) \rfloor 
% \end{equation}

% \begin{equation}\label{eq:23}
%     \begin{split}
%     d^{w_{i}(CT) >  K_{cross}}_{i}(CT) = d_{0} - \left\lfloor \frac{d f_{w,i}(CT)}{d w_{t}(CT)} \right\rfloor
%     + \left\lceil f_{B,i}(CT) \times (- min(0,f_{w,i}(CT))) \right\rceil
% \end{split}
% \end{equation}
\begin{equation}\label{eq:23}
\begin{split}
    d^{w_{i}(CT) >  K_{cross} }_{i}(CT) = d_{0} - \lfloor (K_{1} - 2K_{2}w_{i}(CT)) \rfloor + \\
    \lceil f^{B}_{i}(CT) \times (-K_{0} - K_{1}w_{i}(CT) + K_{2}w_{i}^{2}(CT)) \rceil
\end{split}
\end{equation}

\section{Proof of Lemma~\ref{lemma1}}\label{app:A}
%\begin{lem} %\label{lemma1}
 \textbf{Lemma IV.1.} \textit{For the non-cooperative game defined in~\eqref{game} with a bounded strategy set and fixed number of users for a DAG-based DLT network, if $\frac{\zeta_{1}}{2\zeta_{2}} =  K_{cross}$, then the strategy $w_{i}(CT) =  K_{cross} $ is the Nash equilibrium $\forall \mathcal{V}^{N}_{i} \in \mathcal{V}^{N}$.}
%\end{lem}
\begin{proof}
A non-cooperative game has at least one pure strategy Nash equilibrium if its strategy set is compact and convex, and the payoff function is strictly concave and continuous in the strategy set for every user in the network~\cite{rosen65}. 

Now, the strategy set for our defined game is given as $w_{i}(CT) = [0, \Lambda]$. Since the given strategy set is closed and bounded, it is compact. 
The convexity of any set $S_{i}$ depends on the condition in~\eqref{cond} being satisfied, for any $ S^{1}_{i}, S^{2}_{i} \in S_{i}$ and for any $\theta \in [0,1]$.
\begin{equation}\label{cond}
    0 \leq \theta S^{1}_{i} + (1 - \theta) S^{2}_{i} \leq \Lambda
\end{equation}
The set $S_{i} \in \Re$ satisfies the above condition. Hence, it is convex $\forall i \in \mathcal{V^{N}}$. Using the payoff function $\Theta_{i}$, we derive the Hessian matrix for the game. Further, we check whether the Hessian matrix is negative definite $\forall s \in S$ or not. Here, $S$ is the strategy space for the whole network. The Hessian matrix $H_{s}$ is given in~\eqref{hessian}. For our case, $S_{i} = S$, i.e. each node has identical strategy space.

\begin{equation}\label{hessian}
    H_{s} = \begin{bmatrix}
\Theta^{''}_{11} & \Theta^{''}_{12}  & \cdots & \Theta^{''}_{1n} \\
 \Theta^{''}_{21} & \Theta^{''}_{22}  & \cdots & \Theta^{''}_{2n} \\
 \vdots& \vdots & \ddots & \vdots \\
 \Theta^{''}_{n1} & \Theta^{''}_{n2}  & \cdots & \Theta^{''}_{nn}
\end{bmatrix}
\end{equation}
Here $\Theta^{''}_{ij} = \frac{\partial^{2}\Theta_{i}}{\partial w_{i}(CT)\partial w_{j}(CT)}$, $\forall i,j \in \mathcal{V_{N}}$.
Now we compute the values of $\Theta^{''}_{ij}$ to check the conditions. For $\Theta^{''}_{ij}$, first we compute $\frac{\partial\Theta_{i}}{\partial w_{j}(CT)}$.

\begin{equation}\label{payoffderiv1}
    \frac{\partial\Theta_{i}}{\partial w_{j}(CT)} = \eta_{3}+ 2\eta_{4}w_{j}(CT) + \eta_{5} \sum^{i \neq j}_{i \in \mathcal{V}^{N}} w_{i}
\end{equation}

Now, to find $\Theta^{''}_{ij}$ for $i \neq j$, we partially differentiate~\eqref{payoffderiv1} by $w_{i}(CT)$ as given in~\eqref{doubderiv}. 

\begin{equation}\label{doubderiv}
    \Theta^{''}_{ij} = \frac{\partial^{2}\Theta_{i}}{\partial w_{i}(CT)\partial w_{j}(CT)} =  \eta_{5} = - 2\frac{\kappa_{2}\zeta_{2}}{n^{2}}
\end{equation}
Similarly, we find $\Theta^{''}_{ii}$ as given in~\eqref{doubderiv1}. 
\begin{equation}\label{doubderiv1}
    \Theta^{''}_{ii} = \frac{\partial^{2}\Theta_{i}}{\partial^{2} w_{i}(CT)} =  2 \eta_{2} = -(2\kappa_{1}\zeta_{2} + 2\frac{\kappa_{2}\zeta_{2}}{n^{2}})
\end{equation}
To verify whether the Hessian matrix $H(s)$ is negative definite or not, we check for its principal minors~\cite{mandy2018leading}. The condition for a matrix to be negative definite is that even order and odd order principal minors be positive and negative, respectively, based on Sylvester's criteria in the context of real and symmetric matrices~\cite{Hornmatrix},~\cite{chong2004introduction}. Based on values obtained for $ \Theta^{''}_{ii}$ and $ \Theta^{''}_{ij}$ in $H(s)$ $\forall \;\; i,j \in \mathcal{V}^{N}$, we safely conclude that the given matrix is negative definite. Now, if the Hessian matrix is negative definite, then the respective payoff function is strictly concave and continuous~\cite{rosen65}. Therefore, with establishing a compact and convex strategy set and the concave and continuous payoff function, the non-cooperative game defined in equation~\eqref{game} has at least one Nash equilibrium. 

Now, we find the uniqueness of Nash equilibrium using the best response function~\cite{fudenberg1991game}. We obtain the Nash equilibrium through proof by induction for the best response functions of each user. To start, let us have the case where we have 2 users, $\mathcal{V}^{N}_{i}$ and $\mathcal{V}^{N}_{j}$ in the network, i.e., $n = 2$. The utility~\eqref{derutil} for $\mathcal{V}^{N}_{i}$ with the substitution $\zeta_{1} = 2\zeta_{2}K_{cross}$ is represented in~\eqref{twoutil}.
\begin{equation}\label{twoutil}
\begin{aligned}
   & \Theta_{i} = (2\kappa_{1}\zeta_{2}K_{cross} + \kappa_{2}\zeta_{2}K_{cross}) w_{i}(CT) \\
   & -(\kappa_{1}\zeta_{2} + \frac{\kappa_{2}\zeta_{2}}{4}) w^{2}_{i}(CT) + \kappa_{2}\zeta_{2}K_{cross} w_{j}(CT)  \\
   & - \frac{\kappa_{2}\zeta_{2}}{4}  w^{2}_{j}(CT) - \frac{\kappa_{2}\zeta_{2}}{2} w_{i}(CT) w_{j}(CT) 
    \end{aligned}
\end{equation}
Let the ratio of the share of payoff be given as $\frac{\kappa_{1}}{\kappa_{2}} = \kappa_{R}$. For such a case, the payoff becomes as in~\eqref{redtwoutil}. 
\begin{equation}\label{redtwoutil}
\begin{aligned}
     & \Theta_{i} = \kappa_{2}\zeta_{2}((2\kappa_{R}K_{cross} + K_{cross}) w_{i}(CT) 
    -(\kappa_{R} + \frac{1}{4}) w^{2}_{i}(CT) \\
    &+ K_{cross} w_{j}(CT)  
    - \frac{1}{4}  w^{2}_{j}(CT) - \frac{1}{2} w_{i}(CT) w_{j}(CT)) 
    \end{aligned}
\end{equation}
To obtain the best response function, we compute the derivative of $\mathcal{V}^{N}_{i}$'s utility with respect to $w_{i}(CT)$ and equate it to $0$~\cite{owen2013game}. 
\begin{equation}\label{dertwoutil}
\begin{aligned}
   & \frac{\partial \Theta_{i}}{\partial w_{i}(CT)} = 2\kappa_{R}K_{cross} + K_{cross} \\
   &- (2\kappa_{R} + \frac{1}{2}) w_{i}(CT)  - \frac{1}{2} w_{j}(CT) = 0
\end{aligned}
\end{equation}
The best response function for $\mathcal{V}^{N}_{i}$ is given as below.
\begin{equation}\label{respwi}
 b_{i}(w_{j}(CT)) =  \frac{4\kappa_{R}K_{cross} + 2K_{cross}}{4\kappa_{R} + 1} - \frac{1}{4\kappa_{R} + 1}w_{j}(CT)
\end{equation}
Similarly, for $\mathcal{V}^{N}_{j}$, we obtain the best response function as~\eqref{respwj}. 
\begin{equation}\label{respwj}
b_{j}(w_{i}(CT)) =   \frac{4\kappa_{R}K_{cross} + 2K_{cross}}{4\kappa_{R} + 1} - \frac{1}{4\kappa_{R} + 1}w_{i}(CT)
\end{equation}
The Nash equilibrium for the best response equations \eqref{respwi} and \eqref{respwj} is the pair $\{w^{*}_{i}(CT), w^{*}_{j}(CT)\}$
such that $w^{*}_{j}(CT) = b_{j}(w^{*}_{i}(CT))$ and $w^{*}_{i}(CT) = b_{i}(w^{*}_{j}(CT))$~\cite{Varmanoncoop}. On solving, we obtain that the only solution for such a condition is $w^{*}_{i}(CT) = w^{*}_{j}(CT) = K_{cross}$. 

For generalized proof through induction, we assume that the Nash equilibrium is $K_{cross}$ for $n$ users in the system. Now, we proceed to obtain the Nash equilibrium for the system with $n+1$ users. For the $n$ users with $\mathcal{V}^{N} = \{\mathcal{V}^{N}_{1},\cdots,\mathcal{V}^{N}_{n}\}$, the payoff for $\mathcal{V}^{N}_{i}$ with the substitution $\zeta_{1} = 2\zeta_{2}K_{cross}$ and $\kappa_{1} = \kappa_{R}\kappa_{2}$ is given in~\eqref{rednuserutil}. 
\begin{equation}\label{rednuserutil}
\begin{aligned}
     & \Theta_{i} = \kappa_{2}\zeta_{2}((2\kappa_{R}K_{cross} 
     + \frac{2K_{cross}}{n}) w_{i}(CT) \\
     & -(\kappa_{R} + \frac{1}{n^{2}}) w^{2}_{i}(CT) 
    + \frac{2K_{cross}}{n}\sum^{j \neq i}_{j \in \mathcal{V}^{N}} w_{j}(CT)  \\
    & - \frac{1}{n^{2}} \sum^{j \neq i}_{j \in \mathcal{V}^{N}}  w^{2}_{j}(CT) - \frac{2}{n^{2}} \sum_{i,j \in [1,n]} w_{i}(CT) w_{j}(CT)) 
    \end{aligned}
\end{equation}
The best response function for $\mathcal{V}^{N}_{i}$ is obtained by partial differentiation as given in~\eqref{dernuserutil}.
\begin{equation}\label{dernuserutil}
\begin{aligned}
   \frac{\partial \Theta_{i}}{\partial w_{i}(CT)} =&  (2\kappa_{R}K_{cross} + \frac{2K_{cross}}{n}) \\
   &- 2(\kappa_{R} + \frac{1}{n^{2}}) w_{i}(CT) -  \frac{2}{n^{2}} \sum^{j \neq i}_{j \in [1,n]} w_{j}(CT)) = 0
\end{aligned}
\end{equation}
\begin{equation}\label{respwinuser}
\begin{aligned}
 b_{i}(w_{-i}(CT)) = & \frac{n^{2}\kappa_{R}K_{cross} + nK_{cross}}{n^{2}\kappa_{R} + 1} \\
 & -\frac{1}{n^{2}\kappa_{R} + 1}\sum^{j \neq i}_{j \in [1,n]} w_{j}(CT)
 \end{aligned}
\end{equation}
For the Nash equilibrium, the solution for the above is $b_{i}(w_{-i}(CT)) = w_{j}(CT) = K_{cross}$, $\forall j \in [1,n] - \{i\}$. Now, we consider the scenario for $n+1$ users with the same system conditions. The best response function for $w_{i}(CT)$ in such case is given in~\eqref{respwinplususer}.
\begin{equation}\label{respwinplususer}
\begin{aligned}
 & ((n+1)^{2}\kappa_{R} + 1)b_{i}(w_{-i}(CT)) =  (n+1)^{2}\kappa_{R}K_{cross} \\
 &+ (n+1)K_{cross} - \sum^{j \neq i}_{j \in [1,n+1]} w_{j}(CT)
\end{aligned}
\end{equation}
Expanding~\eqref{respwinplususer}, we get~\eqref{respwinplususer1}.
\begin{equation}\label{respwinplususer1}
\begin{aligned}
 & (n^{2}\kappa_{R} + 1 + 2n\kappa_{R} + \kappa_{R})b_{i}(w_{-i}(CT)) =  n^{2}\kappa_{R}K_{cross}\\ 
 & + nK_{cross} + \kappa_{R}K_{cross} +2n\kappa_{R}K_{cross} + K_{cross} \\
 &- \sum^{j \neq i}_{j \in [1,n]} w_{j}(CT) - w_{n+1}(CT)
\end{aligned}
\end{equation}
Using~\eqref{respwinuser}, we reduce~\eqref{respwinplususer1} to~\eqref{respwinplususer2}.
\begin{equation}\label{respwinplususer2}
b_{i}(w_{-i}(CT)) =  K_{cross} + \frac{K_{cross} - w_{n+1}(CT)}{2n\kappa_{R} + \kappa_{R}}
\end{equation}
Similarly, the best response function for $w_{n+1}(CT)$ in terms of $w_{i}(CT)$ is derived as in~\eqref{respwinplususer3}. 
\begin{equation}\label{respwinplususer3}
b_{n+1}(w_{-(n+1)}(CT)) =  K_{cross} + \frac{K_{cross} - w_{i}(CT)}{2n\kappa_{R} + \kappa_{R}}
\end{equation}
Based on the best response functions given in~\eqref{respwinplususer2} and~\eqref{respwinplususer3}, we form a similar set for all users in the system. For the aforementioned equations, the unique solution $\{w^{*}_{1},\cdots,w^{*}_{n+1}\}$, such that $w^{*}_{i} = b_{i}(w^{*}_{-i}(CT))$ $\forall i \in [1,n+1]$ is $K_{cross}$. Therefore, the value $K_{cross}$ is the Nash equilibrium for the given scenario.
\end{proof}

\begin{cor}\label{cor1}
For the system described in Lemma~\ref{lemma1}, the nash equilibrium does not depend on the value $\kappa_{R}$, provided $\kappa_{2} \neq 0$
\end{cor}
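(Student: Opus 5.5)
The aim is to show that the equilibrium point $w^{*}_{i}(CT) = K_{cross}$ from Lemma~\ref{lemma1} is unchanged for every $\kappa_{R} = \kappa_{1}/\kappa_{2} > 0$. The plan is to work directly with the general $n$-user best response~\eqref{respwinuser}, which already has $\kappa_{2}$ factored out, rather than repeating the induction. The condition $\kappa_{2} \neq 0$ is exactly what allows the payoff to be written as $\kappa_{2}\zeta_{2}(\cdots)$ with $\kappa_{1} = \kappa_{R}\kappa_{2}$ well defined, as in~\eqref{rednuserutil}. Since $\kappa_{2}\zeta_{2} > 0$ is a positive multiplicative constant, it does not change any player's maximizer. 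Hence the first-order condition, and therefore the best response, depends on the weights only through $\kappa_{R}$.

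The first step is to substitute the symmetric profile $w_{j}(CT) = K_{cross}$ for all $j \neq i$ into~\eqref{respwinuser}. This gives
\[
b_{i} = \frac{n^{2}\kappa_{R}K_{cross} + nK_{cross} - (n-1)K_{cross}}{n^{2}\kappa_{R}+1}.
\]
Whether this reduces to $K_{cross}$ depends on the constants in~\eqref{respwinuser}. I expect the cleanest route is to redo the first-order condition from~\eqref{derutil} with the $\eta$'s written out. The derivative is $\eta_{1} + 2\eta_{2}w_{i} + \eta_{5}\sum_{j\neq i}w_{j}$, and evaluating it at the symmetric point gives
\[
(\kappa_{1}+\tfrac{\kappa_{2}}{n})\zeta_{1} - 2\kappa_{1}\zeta_{2}K_{cross} - \tfrac{2\kappa_{2}\zeta_{2}}{n^{2}}\,nK_{cross} .
\]
The $\kappa_{1}$ terms cancel exactly when $\zeta_{1} = 2\zeta_{2}K_{cross}$. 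The $\kappa_{2}$ terms must cancel separately, and that cancellation involves no $\kappa_{R}$ at all. So the first-order condition splits as $\kappa_{1}\cdot(0) + \kappa_{2}\cdot(0)$, and it holds for every ratio $\kappa_{R}$.

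The second step is uniqueness for every $\kappa_{R}$. The linear system $w_{i} = b_{i}(w_{-i})$ has coefficient matrix $(n^{2}\kappa_{R}+1)I + (J - I)$, up to the paper's normalisation of the cross term. Its eigenvalues are $n^{2}\kappa_{R}$ on the subspace orthogonal to $\mathbf{1}$ and $n^{2}\kappa_{R} + n$ along $\mathbf{1}$. Both are positive for $\kappa_{R} > 0$, so the matrix is nonsingular and the fixed point is unique. Since $w = K_{cross}\mathbf{1}$ solves the system for every $\kappa_{R}$, it is the unique equilibrium, independent of $\kappa_{R}$. The boundary case $\kappa_{1} = 0$ (that is, $\kappa_{R} = 0$) still leaves the eigenvalue $n > 0$ along $\mathbf{1}$ and $0$ elsewhere, so uniqueness there would need separate comment. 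I would restrict the statement to $\kappa_{R} > 0$, consistent with $\kappa_{1}, \kappa_{2} > 0$.

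The main obstacle is the first step. The bookkeeping of the $U_{av}$ cross terms in~\eqref{avutil1}–\eqref{derutil} (the $\eta_{5}$ coefficient and the $1/n$ versus $1/n^{2}$ factors) is not fully consistent across the paper. The $\kappa_{2}$-part of the first-order condition vanishes at $K_{cross}$ only under a consistent convention, namely $U_{av}$ defined as $\zeta_{1}\bar w - \zeta_{2}\bar w^{2}$ with $\bar w$ the mean strategy. My first move would be to fix that convention explicitly, recompute $\partial\Theta_{i}/\partial w_{i}$, and check that the $\kappa_{2}$-bracket is $\tfrac{\kappa_{2}}{n}(\zeta_{1} - 2\zeta_{2}\bar w)$, which vanishes at $\bar w = K_{cross}$. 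Once that is established, the independence from $\kappa_{R}$ is immediate.
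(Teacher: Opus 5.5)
Your argument is correct, and it reaches the conclusion by a different route from the paper. The paper's proof is a single observation about the induction-step best response \eqref{respwinplususer2}, $b_{i} = K_{cross} + (K_{cross} - w_{n+1}(CT))/((2n+1)\kappa_{R})$: at the equilibrium the numerator vanishes, so $\kappa_{R}$ drops out. You instead work directly with the $n$-player first-order condition built from the paper's $\eta$'s, which is $\kappa_{1}(\zeta_{1} - 2\zeta_{2}w_{i}) + \frac{\kappa_{2}}{n}(\zeta_{1} - 2\zeta_{2}\bar{w})$, where $\bar{w}$ is the mean strategy. Both brackets vanish at $K_{cross}\mathbf{1}$ for any weights. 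You then get uniqueness from the spectrum of $n^{2}\kappa_{R}I + J$, where $J$ is the all-ones matrix.

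The paper's route is shorter but inherits the weaknesses of \eqref{respwinplususer2}. That formula is obtained by treating the $n$-player relation \eqref{respwinuser} as still valid inside the $(n+1)$-player game. It is also undefined at $\kappa_{R} = 0$. Finally, ``the numerator is zero at equilibrium'' only shows that $K_{cross}$ remains a fixed point, not that no other equilibrium appears. Your route avoids the induction entirely and exposes the real hypothesis. At $\kappa_{R} = 0$ the system matrix is the singular $J$, and in fact every profile in $[0,\Lambda]^{n}$ with mean $K_{cross}$ is then an equilibrium. So the statement needs $\kappa_{1} > 0$ (the paper's standing assumption), not merely $\kappa_{2} \neq 0$.

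There are two small points. First, your opening step already closes. The numerator $n^{2}\kappa_{R}K_{cross} + nK_{cross} - (n-1)K_{cross}$ equals $(n^{2}\kappa_{R}+1)K_{cross}$, so \eqref{respwinuser} returns $K_{cross}$ at once. Also, the $\eta$'s listed after \eqref{derutil} are exactly those of $U_{av} = \zeta_{1}\bar{w} - \zeta_{2}\bar{w}^{2}$. So the convention you planned to fix is already the paper's, and your ``main obstacle'' is settled by your own displayed computation.

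Second, the linear system $w_{i} = b_{i}(w_{-i})$ only describes interior equilibria. To rule out equilibria with some $w_{i} \in \{0,\Lambda\}$, note that $\Theta_{i}$ differs from $P(w) = \kappa_{1}\sum_{j} U_{j} + \kappa_{2}U_{av}$ by a term not involving $w_{i}$. The game is therefore an exact potential game. Its potential has Hessian $-2\zeta_{2}(\kappa_{1}I + \frac{\kappa_{2}}{n^{2}}J)$, which is negative definite for $\kappa_{1} > 0$. The equilibria are then the maximizers of $P$ on the box, and the unique one is the interior critical point $K_{cross}\mathbf{1}$.
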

\begin{proof}
In~\eqref{respwinplususer2} and~\eqref{respwinplususer3}, we clearly see that $w_{i}(CT) = w_{n+1}(CT) = K_{cross}$ for the Nash equilibrium solution. The term $\kappa_{R}$ in the denominator has no effect on the outcome as the numerator is equal to $0$, while $n > 0$.
\end{proof}

\section{Proof of Lemma~\ref{lemma2}}\label{app:B}
%\begin{lem}%\label{lemma2}
\textbf{Lemma IV.2.} \textit{The shift in the Nash equilibrium for the game defined in~\eqref{vargame}, when the number of users change from $n_{1}$ to $n_{2}$ during the period $[(C-1)T,CT]$ will be equal to $\left|\frac{\zeta_{1}(CT)}{2\zeta_{2}(CT)} - \frac{\zeta_{1}((C-1)T)}{2\zeta_{2}((C-1)T)}\right|$, where $\frac{\zeta_{1}(CT)}{2\zeta_{2}(CT)} = K_{cross}^{CT}$ and $\frac{\zeta_{1}((C-1)T)}{2\zeta_{2}((C-1)T)} = K_{cross}^{(C-1)T}$.}
%\end{lem}
\begin{proof}
Through the proof of Lemma~\ref{lemma1}, we deduce that for two users $\mathcal{V}^{N}_{i}, \mathcal{V}^{N}_{j} \in \mathcal{V}^{(N,(C-1)T)}$, the Nash equilibrium is $K_{cross}^{(C-1)T}$ in the period $[(C-1)T,CT]$, with $\frac{\zeta_{1}((C-1)T)}{2\zeta_{2}((C-1)T)} = K_{cross}^{(C-1)T}$. The variable payoff function for the user set $\mathcal{V}^{(N,(C-1)T)}$ is given in~\eqref{varutil}. We proceed with the scenario of $n_{1} = 2$ and $n_{2} = 3$.
\begin{equation}\label{varutil}
    (\Theta^{(C-1)T}_{i})_{i \in \mathcal{V}^{(N,(C-1)T)}} = \kappa_{1}U^{(C-1)T}_{i} + \kappa_{2}U^{(C-1)T}_{av}
\end{equation}
For $n_{1} = 2$ in $\mathcal{V}^{(N,(C-1)T)}$, the derived utility function is written in~\eqref{varderutil}, where $\Theta_{i}((C-1)T)$ is the payoff obtained from the strategy $w_{i}(CT)$.
\begin{equation}\label{varderutil}
\begin{aligned}
    &(\Theta^{(C-1)T)}_{i})_{i \in \mathcal{V}^{(N,(C-1)T)}} = \eta_{1}((C-1)T) w_{i}((C-1)T) + \\
    & \eta_{2}((C-1)T) w^{2}_{i}((C-1)T) + \eta_{3}((C-1)T) \sum^{j \neq i}_{j \in \mathcal{V}_{N}} w_{j}((C-1)T) \\
    & + \eta_{4}((C-1)T) \sum^{j \neq i}_{j \in \mathcal{V}^{N}} w^{2}_{j}((C-1)T) \\
    &+ \eta_{5}((C-1)T)\sum_{i,j \in \mathcal{V}^{N}} w_{i}((C-1)T) w_{j}((C-1)T)
    \end{aligned}
\end{equation}
Now, suppose during $[(C-1)T,CT]$, a new node $\mathcal{V}^{N}_{k}$ joins and starts issuing transactions from the period $[CT, (C+1)T]$. The user set now becomes $\mathcal{V}^{(N,CT)} = \{\mathcal{V}^{N}_{i}, \mathcal{V}^{N}_{j}, \mathcal{V}^{N}_{k}\}$. In such case, the network capacity and subsequently average capacity changes, let it be $K^{CT}_{cross}$. The utility function also changes accordingly from $(\Theta^{(C-1)T}_{i})_{i \in \mathcal{V}^{(N,(C-1)T)}}$ to $(\Theta^{CT}_{i})_{i \in \mathcal{V}^{(N,CT)}}$. The game parameters are then written in the following form for the period $[CT,(C+1)T]$ with $n_{2}$ users, with the values derived based on constant values of static node set game. The values are $\eta_{1}(CT) = \kappa_{1}\zeta_{1}(CT) + \frac{\kappa_{2}\zeta_{1}(CT)}{n_{2}}$, $\eta_{2}(CT) = -(\kappa_{1}\zeta_{2}(CT) + \frac{\kappa_{2}\zeta_{2}(CT)}{n_{2}^{2}})$, $\eta_{3}(CT) = \frac{\kappa_{2}\zeta_{1}(CT)}{n_{2}}$, $\eta_{4}(CT) = - \frac{\kappa_{2}\zeta_{2}(CT)}{n_{2}^{2}}$, and $\eta_{5}(CT) = - \frac{2\kappa_{2}\zeta_{2}(CT)}{n_{2}^{2}}$.  

When the user set becomes $\mathcal{V}^{(N,CT)}$, the utility function $\Theta^{CT}_{i}$ for $\mathcal{V}^{N}_{i}$ for the period $[CT, (C+1)T]$ will be same as in~\eqref{varderutil} with updated constant values for $n_{2}$ nodes.
%With the addition of a new user into the set, the Nash equilibrium gets shifted due to the shift in payoff function parameters caused by a shift in the network capacity. 
To assess the shift in the Nash equilibrium, first, we take the scenario of two users $\mathcal{V}^{N}_{i}$ and $\mathcal{V}^{N}_{j}$ in the network during the period $[(C-1)T, CT]$. From corollary~\ref{cor1}, we know that the values of $\kappa_{R}$ have no effect. Therefore, we take $\kappa_{1} = \kappa_{2} = \kappa$ for the sake of simplicity. The best response function for $\mathcal{V}^{N}_{i}$ for the same period is derived as~\eqref{vartwoutil}.
\begin{equation}\label{vartwoutil}
\begin{split}
    \frac{\partial \Theta^{(C-1)T}_{i}}{\partial w_{i}((C-1)T)} = \kappa\zeta_{2}((C-1)T)(3K^{(C-1)T}_{cross} \\
    - \frac{5}{2} w_{i}((C-1)T)  - \frac{1}{2} w_{j}((C-1)T)) = 0
    \end{split}
\end{equation}
From the proof of Lemma~\ref{lemma1}, we ascertain the Nash equilibrium for the period $[(C-1)T, CT]$ to be $K^{(C-1)T}_{cross}$ for $n_{1} = 2$.

Now, we carry out the partial differentiation of $\Theta^{CT}_{i}$ with the user set $\mathcal{V}^{CT}_{N}$ as shown below.
\begin{equation}
\begin{split}
    \frac{\partial \Theta^{CT}_{i}}{\partial w_{i}(CT)} = \eta_{1}(CT) + 2\eta_{2}(CT) w_{i}(CT) \\
    + \eta_{5}(CT)( w_{j}(CT)  + w_{k}(CT))
    \end{split}
\end{equation}
Assuming the equal share of individual and average strategy payoff, we take $\kappa_{1} = \kappa_{2} = \kappa$. Substituting the value of weight parameters with the condition $\frac{\zeta_{1}(CT)}{2\zeta_{2}(CT)} = K^{CT}_{cross}$, we deduce the below expression. 
\begin{equation}
\begin{aligned}
  &\frac{\partial \Theta^{CT}_{i}}{\partial w_{i}(CT)} =   \kappa\zeta_{1}(CT) + \frac{\kappa\zeta_{1}(CT)}{3}  \\
   & -2(\kappa\zeta_{2}(CT) + \frac{\kappa\zeta_{2}(CT)}{9}) w_{i}(CT)  \\
   &- \frac{2\kappa\zeta_{2}(CT)}{9}( w_{j}(CT)  + w_{k}(CT))\\
   & = \kappa\zeta_{2}(CT)( \frac{8K^{CT}_{cross}}{3}  - \frac{20}{9} w_{i}(CT) - \frac{2}{9}( w_{j}(CT)  + w_{k}(CT)))  
   \end{aligned}
\end{equation}
Putting the partially differentiated equation equal to zero, we obtain the best response function of $\mathcal{V}^{N}_{i}$ with respect to strategies of $\mathcal{V}^{N}_{j}$ and $\mathcal{V}^{N}_{k}$ in~\eqref{brfuncs}. In a similar way, we can write the best response functions of $\mathcal{V}^{N}_{j}$ and $\mathcal{V}^{N}_{k}$ with respect to the remaining two users' strategies.
\begin{equation}\label{brfuncs}
b^{CT}_{i}(w_{-i}(CT)) =   1.2 K^{CT}_{cross} - 0.1(w_{j}(CT) + w_{k}(CT))  
\end{equation}
Solving the best response functions, the strategy $w_{i}(CT) = w_{j}(CT) = w_{k}(CT) = K^{CT}_{cross}$ is found to be the new Nash equilibrium for the network. Therefore, the shift in the Nash equilibrium, i.e., $|K^{CT}_{cross} - K^{(C-1)T}_{cross}|$ will be equal to $|\frac{\zeta_{1}(CT)}{2\zeta_{2}(CT)} - \frac{\zeta_{1}((C-1)T)}{2\zeta_{2}((C-1)T)}|$. If $n_{1} = 3$ and $n_{2} = 2$, i.e., a user leaves the system, the shift in the Nash equilibrium will be $|K^{CT}_{cross} - K^{(C-1)T}_{cross}|$ in a similar way.

If we talk in generalized terms, the best response function for $\mathcal{V}^{N}_{i}$ is derived based on the observations from the lemma~\ref{lemma1}. For $\mathcal{V}^{N}_{i} \in \mathcal{V}^{(N,(C-1)T)}$ with $|\mathcal{V}^{(N,(C-1)T)}| = n_{1}$ for the period $[(C-1)T, CT]$, the best response function is given in~\eqref{brfuncn1}. 
\begin{equation}\label{brfuncn1}
\begin{aligned}
   & b^{(C-1)T}_{i}(w_{-i}((C-1)T)) =  \frac{n^{2}_{1}K^{(C-1)T}_{cross} + n_{1}K^{(C-1)T}_{cross}}{n^{2}_{1} + 1} \\
   & - \frac{1}{n_{1}^{2} + 1}(\sum^{j \neq i}_{j \in \mathcal{V}^{(N,(C-1)T)}} w_{j}((C-1)T))
    \end{aligned}
\end{equation}
We obtain the unique solution from the above set of equations, which is $\forall \mathcal{V}^{N}_{i} \in \mathcal{V}^{(N,(C-1)T)}$, $w_{i}((C-1)T) = K^{(C-1)T}_{cross}$. 

Similarly, for $\mathcal{V}^{N}_{i} \in \mathcal{V}^{(N,CT)}$ with $|\mathcal{V}^{(N,CT)}| = n_{2}$ for the period $[CT, (C+1)T]$, the best response function based on Lemma~\ref{lemma1} is given in~\eqref{brfuncn2}.  
\begin{equation}\label{brfuncn2}
\begin{aligned}
b^{CT}_{i}(w_{-i}(CT)) =  &  \frac{n^{2}_{2}K^{CT}_{cross} + n_{2}K^{CT}_{cross}}{n^{2}_{2} + 1} \\
&- \frac{1}{n_{2}^{2} + 1}(\sum^{j \neq i}_{j \in \mathcal{V}^{(N,CT)}} w_{j}(CT))
\end{aligned}
\end{equation}
In a similar way as in the case of $n_{1}$ users, we obtain the unique solution for the set of $n_{2}$ users, which is $\forall \mathcal{V}^{N}_{i} \in \mathcal{V}^{(N,CT)}$, $w_{i}(CT) = K^{CT}_{cross}$.

From the set~\eqref{brfuncn1} and~\eqref{brfuncn2}, the Nash equilibrium points for the user sets $\mathcal{V}^{(N,(C-1)T)}$ and $\mathcal{V}^{(N,CT)}$ are $K^{(C-1)T}_{cross}$ and $K^{CT}_{cross}$ respectively. Based on the assumptions related to the parameters, the shift in the Nash equilibrium, i.e., $|K^{CT}_{cross} - K^{(C-1)T}_{cross}|$ will be equal to $\left|\frac{\zeta_{1}(CT)}{2\zeta_{2}(CT)} - \frac{\zeta_{1}((C-1)T)}{2\zeta_{2}((C-1)T)}\right|$.  

\end{proof}

%\vskip3pt

% \bio{}
% Author biography without author photo.
% Author biography. Author biography. Author biography.
% Author biography. Author biography. Author biography.
% Author biography. Author biography. Author biography.
% Author biography. Author biography. Author biography.
% Author biography. Author biography. Author biography.
% Author biography. Author biography. Author biography.
% Author biography. Author biography. Author biography.
% Author biography. Author biography. Author biography.
% Author biography. Author biography. Author biography.
% \endbio

% \bio{figs/cas-pic1}
% Author biography with author photo.
% Author biography. Author biography. Author biography.
% Author biography. Author biography. Author biography.
% Author biography. Author biography. Author biography.
% Author biography. Author biography. Author biography.
% Author biography. Author biography. Author biography.
% Author biography. Author biography. Author biography.
% Author biography. Author biography. Author biography.
% Author biography. Author biography. Author biography.
% Author biography. Author biography. Author biography.
% \endbio

% \bio{figs/cas-pic1}
% Author biography with author photo.
% Author biography. Author biography. Author biography.
% Author biography. Author biography. Author biography.
% Author biography. Author biography. Author biography.
% Author biography. Author biography. Author biography.
% \endbio

\end{document}